\newtheorem{thm}{Theorem}
\newtheorem{lemma}{Lemma}
\newtheorem{prop}{Proposition}
\newtheorem{remark}{Remark}[section]
\newcommand{\abs}[1]{ \left | #1 \right |}
\newcommand{\dist}{{\, \rm dist}}
\newcommand{\norm}[1]{ \left \| #1 \right \|}
\newcommand{\Z}{\mathbb Z}
\newcommand{\ba}{\begin{eqnarray}}
\newcommand{\ea}{\end{eqnarray}}
\newcommand{\be}{\begin{equation}}
\newcommand{\ee}{\end{equation}}
\newcommand{\bmu}{\begin{multline}}
\newcommand{\emu}{\end{multline}}
\newcommand{\T}{\mathbb T}
\newcommand{\e}{\mathrm e}
\newcommand{\1}{{\mathbf 1}}
\newcommand{\R}{\mathbb R}
\newcommand{\N}{\mathbb N}
\newcommand{\C}{\mathbb C}
\newcommand{\E}{\mathbb E}
\newcommand\supp{{\rm supp}\,}
 \DeclareMathOperator{\tr}{Tr}
\DeclareMathOperator{\diam}{diam}
\theoremstyle{definition}
\newtheorem{defn}{Definition}
\begin{document}

\title{The weak localization for the alloy-type Anderson model on a cubic
lattice
}

\author{Zhenwei Cao\textsuperscript{1} and Alexander Elgart\textsuperscript{2}}
\footnotetext[1]{Department
of Mathematics, Virginia Tech., Blacksburg, VA, 24061, USA.  E-mail: zhenwei@vt.edu.  Supported in part by NSF
grant DMS--0907165.}
\footnotetext[2]{Department
of Mathematics, Virginia Tech., Blacksburg, VA, 24061, USA.  E-mail: aelgart@vt.edu.  Supported in part by NSF
grant DMS--0907165.}


\date{}

\maketitle

\begin{abstract}
We consider alloy type random Schr\"odinger operators on a cubic
lattice whose randomness is generated by the sign-indefinite
single-site potential. We derive Anderson localization for this
class of models in the Lifshitz tails regime, {\it i.e.} when the
coupling parameter $\lambda$ is small, for the energies $E\le
-C\lambda^{2}$.
\end{abstract}

\section{Introduction and main results}\label{sec:intro}
\subsection{Introduction}
The prototypical model for the study of localization properties of quantum
states of single electrons in disordered solids is the Anderson Hamiltonian $H_A$
on the lattice $\Z^d$. It consists of the sum of the finite difference Laplacian that describes the perfect crystal 
and a multiplication operator by a sequence of independent identically distributed
random variables that emulates the effect of disorder. The basic phenomenon, named Anderson localization after the physicist P. W.
Anderson, is that disorder can cause localization of electron states, which manifests
itself in time evolution (non-spreading of wave packets), (vanishing of) conductivity
in response to electric field, Hall currents in the presence of both magnetic
and electric field, and statistics of the spacing between nearby energy levels. 
The first property implies spectral localization, i.e. the spectral measure of $H_A$ is
almost surely pure point in some energy interval, and almost sure exponential decay of eigenfunctions there.

In this paper we study the properties of the more general class of systems, called the random alloy type
model,  in the three-dimensional setting. The action of the corresponding Hamiltonian $H^\lambda_\omega$
on the vector $\psi\in l^2(\Z^3)$ is described in equation below:
\begin{equation}\label{eq:model}
(H^\lambda_\omega \psi)(x) \ :=\ -\frac{1}{2}\,(\Delta
\psi)(x)+\lambda V_\omega(x) \psi(x)\,.
\end{equation}
Here $\Delta$ denotes the discrete Laplace operator,
\be\label{eq:Laplop}(\Delta \psi)(x) \ = \  \sum_{e\in \Z^3,\ |e|=1}\psi(x+e)\ - \ 6
\psi(x) \,,\ee
and $V_\omega$ stands for a random multiplication operator of the
form
\begin{equation}\label{eq:potentia}
V_\omega(x) \ = \ \sum_{i\in k\Z^3} \omega_i\, u(x-i)\,.
\end{equation}
Here $k\Z^3$ denotes the set of all points $i$ on $\Z^3$ which
are of the form \[i\ = \ (k_1z_1,k_2z_2,k_3z_3)\] for an arbitrary vector $z=(z_1,z_2,z_3)\in\Z^3$ 
and a given vector $k=(k_1,k_2,k_3)\in\Z^3$. The parameter $\lambda$ conveniently describes the strength of disorder. We note that the standard Anderson Hamiltonian $H_A$ corresponds to the choices $u(x)=\delta_{x}$; $k=(1,1,1)$ (so that $k\Z^3=\Z^3$). Here $ \delta_{x}$ stands for for the Kronecker delta function.

The localization properties are known to hold for $H_A$ in each of the following cases: 1)
high disorder ($\lambda\gg1$), 2) extreme energies, 3) weak disorder
away from the spectrum of the unperturbed operator, and 4) one dimension. Most of the mathematical
results on localization for operators with random potential in
dimensions $d>1$ have been derived using the multi-scale analysis (MSA) initiated by
Fr\"ohlich and Spencer \cite{FS} and by the fractional moment method (FMM)
of Aizenman and Molchanov \cite{AM}.  The Anderson localization problem for $H_A$ has been studied comprehensively, see for example \cite{Kirsch} and
references therein.

For a general single site function $u$ the situation is more complicated. If all the coefficients
$u(x)$ have the same sign,  the dependence of the spectrum
on $V$ is monotone - property that obviously holds true for $H_A$. Localization in such systems (in all regimes above) is relatively well understood by now, using the methods developed for the original Anderson model.
There is however  no physically compelling reason for a random tight binding alloy
model to be monotone, and the natural question is whether Anderson localization
still holds in the aforementioned regimes in the non-monotone case, i.e. when $u(x)$ are not all of the same sign. Mathematically,
the problem becomes especially acute when $\bar u :=\sum_x u(x) = 0$, see the discussion at the end of this section. In the strong disorder regime the recent preprint \cite{ess} extended FMM technique for a class of matrix Hamiltonians that include $H^\lambda_\omega$.

In this paper we study the localization properties of the alloy type models in the so called Lifshitz tails regime, namely localization of the states with energies that lie below the spectrum of the free Laplacian, for $\lambda\ll1$. The occurrence of localization for $H_A$ at energies near the band edges at
weak disorder is related to the rarefaction of low eigenvalues, and
was already discussed in the physical literature by I.~M. Lifshitz
in 1964, see Section $3$ in \cite{Lifshitz1},  and \cite{Lifshitz2}.

As far as the rigorous results are concerned, there is an extensive literature devoted to the proof of Lifshitz tails as well as Anderson localization in this regime for the original Anderson model $H_A$. In this case $\inf\sigma(H_A)= C\lambda$ almost surely, where the constant $C$ is smaller than zero provided that the random variables $\omega_x$ are i.i.d. and assume negative values with non zero probability (we will specify the assumptions on the randomness later on). One then is interested in showing that there is a non trivial interval $I$ of localization at the bottom of the spectrum of $H_A$, with $I=[C\lambda,E_0(\lambda)]$.  Among results in this direction, let us mention the work of M.~Aizenman \cite{A} that established localization for  $E_0(\lambda)=C\lambda+O(\lambda^{5/4})$. This was later improved by W-M.~Wang \cite{Wang} to $E_0(\lambda)=C\lambda+O(\lambda)$ and then by F.~Klopp \cite{Klopp2}  to $E_0(\lambda)=\tilde C\lambda^{7/6}$, with $\tilde C<0$. Motivated by the unpublished note
of T.~Spencer \cite{Spencer}, one of the authors \cite{elgart} derived the localization for the interval $I$ characterized by 
$E_0(\lambda)=\hat C\lambda^2$, with $\hat C<0$ (it is expected on physical grounds that beyond the spectrum of $H_A$ consists of delocalized states slightly above the threshold established in the latter paper). The proof utilizes the diagrammatic technique also employed in the current work. In fact a large portion of the effort spend here involves reducing the problem to the one studied earlier in \cite{elgart}, and generalizing the techniques presented there.

Less is known in the case of a general single site potential. When the average $\bar u$ of the single site potential is not equal to zero, $\inf\sigma(H_\omega^\lambda)\le C\lambda$ almost surely, with $C<0$ (see Section 5.1 of \cite{Klopp1}).  F.  Klopp  established the region of localization with $E_0(\lambda)=\tilde C \lambda^{7/6}$ in this case, \cite{Klopp2}. We show here that it persists at the
energy range $E\le \hat C\lambda^2$ regardless of the value of $\bar u$. However, for $\bar u=0$ the spectrum starts at $C \lambda^2$, and in order for the result to be non trivial in this situation, we have to show that $C<\hat C$.  At the end of this section we will describe the dipole model for which such condition can be proven.

\subsection{Assumptions}
\subsubsection{Randomness}
\begin{enumerate}
\item[($\mathcal A$)] The values of the random potential $\{\omega_i\}$ are
i.i.d. variables, with even, compactly
supported on an interval $J$, and bounded probability density
$\rho$. We will further assume (without loss of generality) that $J$ is centered around the origin and that the second moment satisfies $\E\, \omega_i^2=1$. The function $\rho$ is $\alpha$-H\"older continuous:
\be\label{eq:condondens} |\rho(x)-\rho(y)|\ \le \
K|x-y|^{\alpha}\,\max(\1_{J}(x),\1_{J}(y)\,),\ee
with $\alpha>0$ and where $\1_I$ stands for a characteristic function of the set $I$.
\end{enumerate}
\subsubsection{Single site potential}
We will focus our
attention on two somewhat extreme cases:
\begin{enumerate}
\item[$(\mathcal O)$]{\bf {Overlapping setup:}}
 The vector $k$ is of the form $k=(1,1,1)$, so that $k\Z^3=\Z^3$. This case corresponds in some
sense to the maximal random setting, and here we will impose rather
mild conditions on the {\it single site potential} function $u$.
Namely, we will assume that in this case $u(x)$ decays exponentially fast:
\be\label{eq:condonu} |u(x)| \ \le \ C e^{-A|x|}\,.\ee
\item[$(\mathcal N)$]{{\bf Non overlapping setup:}}
The vector $k$ is such that
$ \{\Theta-i\}\cap\Theta=\emptyset$ for all $0\neq i\in
k\Z^3$, and  $\supp\, u =: \Theta$ is compact.  This setting correspond to non overlapping  random potential. We will denote the corresponding primitive cell $\hat \Theta$, i.e. $\Theta\subset \hat \Theta$ and the translates of $\hat \Theta$ by $ i\in
k\Z^3$ tile $\Z^3$.
\end{enumerate}
\subsection{Notation and quantities of interest}

Let $e(p)$ denote the dispersion law, associated with the Fourier
transform of the Laplacian, $({\mathcal F} \Delta f)(p)= -2 e(p)\hat
f(p)$, where
$$\hat f( p):=({\mathcal F}f)( p):=\sum_{n\in\Z^3}e^{-i2\pi p\cdot
 n}f(n)\,,\quad   p \in\T^3:=[-1/2,1/2]^3\,,$$ with its inverse
$$\check g(  n )=\int_{\T^3}d^3 p \,e^{i2\pi  p \cdot
 n }f(  p )\,.$$
 One then computes
\begin{equation}\label{eq:e(p)}
e( p)=2\sum_{\alpha=1}^3\sin^2(\pi p\cdot e_\alpha)\,,
\end{equation}
where  $e_\alpha$ is a unit vector in the $\alpha$ direction. The
spectrum of the unperturbed operator $H^0_\omega$ is absolutely
continuous and consists of the interval $[0,6]$.

In what follows we will denote by $A(x,y)$ the kernel of the linear
operator $A$ acting on $l^2(\Z^3)$ (that is $A(x,y)=(\delta_y,
A\delta_x)=\langle y|A|x\rangle$, where $\delta_x$ is an indicator function of the site
$x\in \Z^3$, and $(\cdot,\cdot)$ denotes the inner product of
$l^2(\Z^3)$). We will use the concise notation $\int$ in place of
$\int_{(\T^3)^{k}}$ whenever it is clear from the context.

The  paper is devoted to the investigation of the
 properties of $H_\omega^\lambda$ for a typical
configuration $\omega$ in a weak disorder regime, namely at the
energy range $E<E_0$, where
\begin{enumerate}
\item For  Case $\mathcal O$,
\be\label{eq:E_0}E_0\ = \ E_0^{\mathcal O}\ = \  -2\lambda^2 \|\hat u\|^2_\infty-2\lambda^4\|\hat
u\|_\infty^4\,,\ee
for  $\lambda>0$ being sufficiently small.
\item For Case $\mathcal N$,
\be\label{eq:Esigm} E_0\ =\ E_0^{\mathcal N}\ =  \
-4n\,\lambda^2\,\|u\|^2_\infty\, \{(6-2E)^{\diam
\hat\Theta}\,|\hat\Theta|+1\}\,,\ee
where $\hat \Theta$ is a primitive cell described in Assumption $(\mathcal N)$, and  $|\hat \Theta|$ stands for the cardinality of the set $\hat\Theta$.
\end{enumerate}

\paragraph{Diagramatic expansion and self energy.}

The quantity of the most interest is the typical asymptotic behavior
of the so called Green function (also known as the two point
correlation function, the propagator)
\be\label{eq:R}R_{E+i\epsilon}(x,y) \ = \
(H^\lambda_\omega-E-i\epsilon)^{-1}(x,y)\ee
in the limit $\epsilon\searrow0$. It plays a crucial role in
determining, for instance, the conductivity properties of the
physical sample (whether it is an insulator or a conductor at a
given energy band). On a mathematical level, investigation of the
propagator can yield an insight on the typical spectrum of
$H_\omega^\lambda$ at the vicinity of $E$.
\par 
The technical assertions in this paper are proven using Feynman diagrammatic expansion for $R_{E+i\epsilon}$ around the unperturbed resolvent, i.e. the one with $\lambda=0$ (Section \ref{sec:repr}). The rate of convergence for this expansion in the limit $\epsilon\rightarrow0$ depends strongly on the value of $E$, and sets our limitations for the length of the interval $I$ where we can prove localization. One can eliminate certain terms in this expansion (the so called tapole contributions) that are especially problematic from the convergence point of view, by modifying the unperturbed Hamiltonian. The corresponding addend $\sigma$ is called the {\it self energy} of the model.
\par 
In the case $(\mathcal O)$ we
define the self energy  by the solution of the
self-consistent equation
\begin{equation}\label{eq:self}
\sigma(p,E+i\epsilon)\ =\ \lambda^2\int_{\T^3}\,d^3q\,
\frac{\left|\hat
u(p-q)\right|^2}{e(q)-E-i\epsilon-\sigma(q,E+i\epsilon)}\,.
\end{equation}
The relevant properties of the solution of (\ref{eq:self}) are
collected in Appendix \ref{sec:appendI}. In particular, it has a
single valued solution $\sigma(p,E+i\epsilon)$ for all $\epsilon$,
all $p\in \T^3$ and all values of $E$ that meet the condition
\eqref{eq:E_0}. The function $\sigma$ satisfies
\[\|\sigma\|_\infty\ \le\ \min(-E-2\lambda^4\|\hat
u\|_\infty^4,2\lambda^2 \|\hat u\|^2_\infty).\]
\par

In the case $(\mathcal N)$, the introduction of the self energy term requires some additional preparation. Let $n=|\hat \Theta|$. Let $\{x_i\}_{i=1}^n$ be some enumeration of the
sites of the primitive cell $\hat \Theta$. Let $D$ be the diagonal
$n\times n$ matrix with $D_{ii}=u(x_i)$, and let $\mathcal D$ be its periodic extension to $\ell_2(\Z^3)$, i.e.
\be\label{eq:mathcalD}{\mathcal D}(x,y) \ = \ u(x\hspace{-.2 cm}
\mod \hat\Theta)\,\delta_{x-y}\,.\ee
 For any $n\times n$
matrix $\sigma$,  we construct the periodic operator $\Sigma$ acting on  $\Z^3$ by defining its kernel as
\be\label{eq:Sigm}\Sigma(x,y) \ := \ \begin{cases}
\sigma_{ij}  & x,y\in\hat \Theta+l \hbox{ for some } l\in k\Z^3;\  x\hspace{-.2 cm} \mod \hat\Theta=x_i, \ y \hspace{-.2 cm}\mod
\hat\Theta=x_j\cr
0 &  \hbox{ otherwise } \cr
\end{cases}\,.\ee
We can now define an $n\times n$ matrix $S$   given by
\be \label{eq:Selem} S_{ij} \ = \ \langle
x_i\left|\left(-\Delta/2-E-i\epsilon-\Sigma\right)^{-1} \right|x_j\rangle\,;\quad x_i\,,x_j\in\hat \Theta\,.
\ee
 Then the self energy term $\sigma$ in the case $(\mathcal N)$ is
going to be the $n\times n$ matrix, which satisfies
\begin{equation}\label{eq:self'}
\sigma\ =\ \lambda^2\,D\,
S \, D \,.
\end{equation}
The solution of \eqref{eq:self'}  enjoys properties similar to the
one of \eqref{eq:self}, namely it is unique for all $E<E_0$ (where
$E_0$ is given by \eqref{eq:Esigm}) and
$|\epsilon|<2n\,\lambda^2\,\|u\|^2_\infty$, and satisfies
\be\label{eq:propsigm}
\|\sigma\| \ \le \ 2n\,\lambda^2\,\|u\|^2_\infty  \,.
\ee
We defer further  discussion of the properties of $\sigma$ to Appendix \ref{sec:appendI}.

\subsection{Main result}\label{subs:res}
The hallmark of localization is a rapid decay of the Green function at energies
in the spectrum of $H_\omega$, for the typical configuration
$\omega$. This behavior can be linked to the
non-spreading of wave packets supported in the corresponding energy
regimes and various other manifestations of localization. Our main result, Theorem \ref{thm:main} below, establishes this behavior of the Green function at the band edges of the spectrum, by comparing it with the asymptotics of the free Green
function.
%
\begin{thm}[Anderson Localization for Lifshitz tails regime]\label{thm:main}
For $H_\omega^\lambda$ as above that satisfies Assumption ($\mathcal A$) and either ($\mathcal O$) or ($\mathcal N)$, for any $\nu>0$ there
exists $\lambda_0(\nu)$ such that for all
$\lambda<\lambda_0(\nu)$  the
spectrum of $H_{\omega}$ within the set $E\le E_0-\lambda^{4-\nu}$ is almost-surely of the pure-point type, and the corresponding
eigenfunctions are exponentially localized.
\end{thm}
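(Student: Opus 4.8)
The plan is to derive Theorem~\ref{thm:main} from exponential decay of a moment of the Green function, uniform in $\epsilon$. Specifically, I would aim to produce $s\in(0,1)$ and constants $C,\mu>0$ such that for every $E\le E_0-\lambda^{4-\nu}$ and every $\epsilon>0$
\[
\mathbb{E}\big[\,|R_{E+i\epsilon}(x,y)|^{s}\,\big]\ \le\ C\,e^{-\mu|x-y|}\,.
\]
Once such a bound is in hand, pure-point spectrum together with exponentially decaying eigenfunctions throughout the window $E\le E_0-\lambda^{4-\nu}$ follows from the standard fractional-moment localization criterion of Aizenman--Molchanov; the entire difficulty is therefore to establish this decay in a regime where the single-site potential is sign-indefinite and the usual monotonicity is unavailable.

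To obtain the decay I would not expand $R_{E+i\epsilon}$ about the bare resolvent but about a \emph{dressed} propagator. Set
\[
G\ :=\ \big(-\Delta/2-E-i\epsilon-\Sigma\big)^{-1}\,,
\]
with $\Sigma$ the self-energy solving \eqref{eq:self} in case $(\mathcal O)$ and \eqref{eq:self'} in case $(\mathcal N)$. By the norm bounds recorded just before the theorem (namely $\|\sigma\|_\infty\le 2\lambda^2\|\hat u\|_\infty^2$, resp.\ \eqref{eq:propsigm}), the effective band bottom is shifted by at most $O(\lambda^2)$, so for $E\le E_0$ the kernel $G(x,y)$ decays exponentially with a rate governed by the distance from $E$ to the shifted spectrum. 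Writing $H^\lambda_\omega-E-i\epsilon=G^{-1}+(\lambda V_\omega-\Sigma)$ and iterating the resolvent identity produces a series that I would reorganize as a sum over Feynman diagrams. The reason for choosing $\Sigma$ to solve the self-consistent equation is exactly that it cancels the \emph{tadpole} contractions, which are the terms whose $\epsilon\searrow0$ behavior is most dangerous; after their removal every surviving diagram carries extra powers of $\lambda$.

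The quantitative core is then a per-diagram estimate followed by resummation. Taking $\mathbb{E}$ and using that the $\omega_i$ are i.i.d.\ with an even density forces all odd contractions to vanish and pairs the remaining vertices; each internal line contributes a factor $G$ and each pairing a factor $\lambda^2$ carrying the weight $|\hat u(p-q)|^2$ in case $(\mathcal O)$ and the single-site matrix $D$ on each side as in \eqref{eq:self'} in case $(\mathcal N)$. Because $G$ decays exponentially and the spectral gap $|E-E_0|\gtrsim\lambda^{4-\nu}$ controls the size of each near-edge propagator, the per-diagram bound overcomes the combinatorial proliferation of diagrams, producing a convergent geometric-type series and the sought exponential decay. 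It is precisely the margin $E\le E_0-\lambda^{4-\nu}$ that makes this estimate close: a weaker restriction on $E$ would let the near-singular propagators defeat the $\lambda^2$ suppression supplied by each contraction.

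I expect the principal obstacle to be the reduction of the general alloy model to the situation treated in \cite{elgart}, rather than the abstract scheme itself. In case $(\mathcal O)$ the disorder is \emph{correlated} across sites, since $V_\omega$ is a convolution of the i.i.d.\ field with $u$; the contractions are no longer diagonal and the kernel $|\hat u(p-q)|^2$ must be carried through every diagram while the self-energy is simultaneously controlled via \eqref{eq:self}, with no sign-definite reference operator available---most acutely when $\bar u=0$. In case $(\mathcal N)$ the same difficulty reappears as matrix structure: $\Sigma$ and $G$ become $n\times n$ matrix-valued on the primitive cells $\hat\Theta$, and the scalar estimates must be reproved for the block object $S$ of \eqref{eq:Selem}. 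Securing the required properties of the self-consistent solution $\sigma$---existence, uniqueness and the norm bounds quoted above, deferred to Appendix~\ref{sec:appendI}---together with a Wegner-type a~priori bound that survives both the loss of monotonicity and the mere H\"older regularity \eqref{eq:condondens} of $\rho$, is where I anticipate the bulk of the labor.
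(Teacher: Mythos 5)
Your diagrammatic core coincides with the paper's: both expand $R$ around the self-energy--dressed propagator $R_r=(-\Delta/2-\Sigma-E-i\epsilon)^{-1}$, use the self-consistency of $\sigma$ (Eqs.\ \eqref{eq:self}, \eqref{eq:self'}) to cancel the tadpole contractions, and control the surviving pairings diagram by diagram, with the correlated kernel $|\hat u(p-q)|^2$ in case $(\mathcal O)$ and the block structure on $\hat\Theta$ in case $(\mathcal N)$. The divergence is in the localization machinery, and there your plan has a genuine gap. The truncated expansion \eqref{eq:mdec} leaves a remainder $\sum_z\tilde A_N(x,z)R(z,y)$ containing the \emph{full} resolvent, and the only bound available for that factor without further input is $\|R\|\le\epsilon^{-1}$; this is exactly what the paper uses in \eqref{eq:glob}, and it is why the resulting estimate \eqref{eq:pre} carries the factor $e^{-N}/(\epsilon\delta)$ and is \emph{not} uniform in $\epsilon$. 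To run the Aizenman--Molchanov criterion you would need $\mathbb{E}\,|R_{E+i\epsilon}(x,y)|^{s}\le Ce^{-\mu|x-y|}$ uniformly as $\epsilon\searrow0$, and closing the remainder term by H\"older would require an a priori bound $\sup_{\epsilon>0}\mathbb{E}\,|R_{E+i\epsilon}(z,y)|^{s'}\le C$. For a sign-indefinite single-site potential that a priori bound is not obtainable by the usual one-site spectral averaging (the dependence on each $\omega_i$ is neither rank one nor monotone); establishing it is the central difficulty of \cite{ess} and is done there only at strong disorder. So the step ``once such a bound is in hand\dots'' assumes away the hardest part of the problem.

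The paper avoids this entirely: it uses the expansion only at one fixed, exponentially small $\epsilon=e^{-5\lambda^{-B'\nu/2}}$ to bound $\mathbb{E}\,|R_{\Lambda_{L,x}}(E+i\epsilon;x,w)|$ at the single scale $L=\lambda^{-2}$, then removes the imaginary part of the energy probabilistically via the H\"older-continuity Wegner estimate (Theorem \ref{lem:Wegner}) and Chebyshev, arriving at the initial-volume estimate \eqref{eq:estgrofa}. That estimate, together with the Wegner bound, is fed into multiscale analysis (Theorem 2.4 of \cite{GK}), which performs the bootstrapping to all scales and yields pure point spectrum with exponentially localized eigenfunctions. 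To repair your argument you would have to either prove the missing a priori fractional-moment bound for sign-indefinite $u$ at weak disorder, or convert, as the paper does, to a probabilistic single-scale statement at fixed small $\epsilon$ and let MSA finish the job.
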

%
\subsection{Discussion}
It should be noted that our method works most effectively  when the average $\bar u$ of the single site potential is not equal to zero. In this case $\inf\sigma(H_\omega^\lambda)\le C\lambda$ almost surely, with $C<0$ (see Section 5.1 of \cite{Klopp1}). Hence for $\lambda$ sufficiently small Theorem \ref{thm:main} establishes the localization at the bottom of the spectrum of  $H_\omega^\lambda$. We are only aware of one result on the Anderson localization  in the regime discussed here  for a non sign definite single site potential:  F.  Klopp  proved the weak disorder localization for $E<-\lambda^{7/6}$  in three dimensions, \cite{Klopp2}. Since for $\bar u=0$ case $-\inf\sigma(H_\omega^\lambda)= O(\lambda^2)$, his result does not provide an answer on whether  the bottom of the spectrum is localized or not.

When  the average of the single site potential vanishes, we expect our method to yield the non trivial result when the minimizing configurations of the random potential look flat. That is essentially a reason why  we can cover say the dipole potential in the $(\mathcal O)$ case below, the observation we owe to G\"unter Stolz. In this case the expansion around the free Green function is a sensible procedure to do. However, in the case $(\mathcal N)$ the sufficient symmetry of the single site potential can cause the minimizing configuration to become periodic. For reflection symmetric $u$ this was shown in \cite{Stolz} (for the continuum analogue of $H_\omega^\lambda$). In this situation, the free Green function does not capture the right features of the problem, and as a result our method fails to achieve the non trivial result in this case. It is worth noticing that exploiting the specific knowledge about the minimizing potential, one can show Lifshitz tails and consequently Anderson localization for such reflection symmetric single site potential \cite{StolzKlopp}.

In this paper we consider the cubic lattice, that is $d=3$ case. Similar (in fact better) results can be established for a higher dimensional case, but not for $d<3$. Mathematically it is related to the nature of the point singularity of the propagator  $e(p)$ at zero energy - it is integrable  for $d\ge3$. This fact allows us to control the underlying Feynman series.

We end the discussion with an example pertaining to the case $\hat u(0)=0$ where the technique developed in this paper allows to get  a meaningful result by improving the bound on the threshold energy $E_0$:

Consider the single site potential $u_d$ of the dipole type, i.e.
\be\label{eq:defdip}
u_d(x) \ = \
\begin{cases}
1  & x=0  \cr
-1  & x= e_1  \cr
0 &  \hbox{ otherwise } \cr
\end{cases}\,.
\ee
%
\begin{prop}\label{prop:dipole}
For the single site potential $u_d$ we have
\[\inf\sigma(H^\lambda_\omega)\ <\ -2\lambda^2+O(\lambda^3)\]
 almost surely. The statement of Theorem \ref{thm:main} holds true for all energies $E$ that satisfy
\[E<E_d\ :=\ -(1+\lambda)\lambda^2\,.\]
\end{prop}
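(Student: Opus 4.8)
\emph{Dipole data and strategy.} First I would record the Fourier data of $u_d = \delta_0 - \delta_{e_1}$: one has $\hat u_d(p) = 1 - e^{-i2\pi p_1}$, hence
\[
|\hat u_d(p)|^2 \ = \ 4\sin^2(\pi p_1), \qquad \|\hat u_d\|_\infty = 2, \qquad \hat u_d(0) = 0,
\]
and in position space $V_\omega(x) = \omega_x - \omega_{x-e_1}$. The proposition splits into two tasks. Task~(i) is the lower bound $\inf\sigma(H^\lambda_\omega) < -2\lambda^2 + O(\lambda^3)$. Task~(ii) is to raise the localization threshold from the generic value $E_0^{\mathcal O} = -2\lambda^2\|\hat u_d\|_\infty^2 - 2\lambda^4\|\hat u_d\|_\infty^4 = -8\lambda^2 - 32\lambda^4$ supplied by \eqref{eq:E_0} up to $E_d = -(1+\lambda)\lambda^2$. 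This is the whole point: $E_0^{\mathcal O}$ is far more negative than the band bottom located in Task~(i), so Theorem \ref{thm:main} applied verbatim misses the relevant energies near the spectral edge (and is vacuous when $\supp\rho$ is small, since then the band bottom sits above $E_0^{\mathcal O}$), whereas $E_d$ sits comfortably above that edge.

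\emph{Task (i).} I would use a Bloch/variational argument. Since $\rho$ is even, bounded and normalized by $\E\,\omega_i^2 = 1$, its support $[-a,a]$ must satisfy $a > 1$, so $\pm 1 \in (-a,a)$. Feeding the periodic configuration $\tilde\omega_x = (-1)^{x_1}$ into \eqref{eq:potentia} yields the staggered potential $V_{\tilde\omega}(x) = 2(-1)^{x_1}$, and a Bloch decomposition of $-\tfrac12\Delta + 2\lambda(-1)^{x_1}$ reduces each fibre to the matrix coupling $p$ and $p' := p + \tfrac12 e_1$,
\[
\begin{pmatrix} e(p) & 2\lambda \\ 2\lambda & e(p') \end{pmatrix}.
\]
Using $e(p) + e(p') = 2 + 4\sin^2(\pi p_2) + 4\sin^2(\pi p_3)$ and $\tfrac12(e(p) - e(p')) = -\cos(2\pi p_1)$, the lower band is minimized at $p = 0$ with value $1 - \sqrt{1 + 4\lambda^2} = -2\lambda^2 + O(\lambda^4)$, so $\inf\sigma(H^\lambda_{\tilde\omega}) \le -2\lambda^2 + O(\lambda^4)$. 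To pass to the random operator I would combine ergodicity with Borel--Cantelli: for every $\delta > 0$ the sample configuration approximates $\tilde\omega$ to within $\delta$ on arbitrarily large boxes almost surely, and a trial state supported there gives $\inf\sigma(H^\lambda_\omega) \le -2\lambda^2 + O(\lambda^4) + O(\lambda\delta)$; sending $\delta \to 0$ proves the bound (taking amplitudes $c \in (1,a)$ even yields the strict inequality $< -2\lambda^2$).

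\emph{Task (ii).} The decisive structural feature is that $|\hat u_d(p-q)|^2 = 4\sin^2(\pi(p_1 - q_1))$ \emph{vanishes} on the diagonal $p = q$, which improves every estimate in which $\|\hat u\|_\infty^2$ was previously inserted. Concretely I would re-examine the self-consistent equation \eqref{eq:self} and show that for $E < E_d$ it retains a unique solution $\sigma(\cdot, E + i\epsilon)$ with real part controlled at the band bottom by a bootstrap: as long as the gap $-E - \mathrm{Re}\,\sigma \ge 0$ holds one has $\mathrm{Re}\,(e(q) - E - \sigma)^{-1} \le e(q)^{-1}$, whence
\[
\mathrm{Re}\,\sigma(0,E) \ = \ \lambda^2\,\mathrm{Re}\!\int_{\T^3}\frac{4\sin^2(\pi q_1)}{e(q) - E - \sigma(q,E)}\,d^3q \ \le \ \lambda^2\int_{\T^3}\frac{4\sin^2(\pi q_1)}{e(q)}\,d^3q \ = \ \tfrac23\lambda^2,
\]
the last equality by the symmetry of $e(q) = 2\sum_\alpha \sin^2(\pi q_\alpha)$. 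Since $\tfrac23\lambda^2 < (1+\lambda)\lambda^2 = -E_d$, the bootstrap closes and the effective gap $e(q) - E - \mathrm{Re}\,\sigma(q,E)$ stays positive and of order $\lambda^2$ for all $E < E_d$ — a margin vastly larger than the order-$\lambda^4$ one available generically, hence well beyond the $\lambda^{4-\nu}$ cushion Theorem \ref{thm:main} needs. With the self-energy so controlled I would then re-run the Feynman-diagrammatic expansion behind Theorem \ref{thm:main}, replacing each appearance of $\|\hat u\|_\infty$ by the momentum weight $4\sin^2(\pi p_1)$, to obtain pure-point spectrum with exponentially localized eigenfunctions on $\{E < E_d\}$.

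\emph{Main obstacle.} The routine part is Task~(i); the work is in Task~(ii). The delicate point is to confirm that the diagonal vanishing of $|\hat u_d|^2$ genuinely propagates through the full machinery of Theorem \ref{thm:main}, and in particular that the minimum over $q$ of $e(q) - \mathrm{Re}\,\sigma(q,E)$ is attained near $q = 0$ — i.e.\ that $\mathrm{Re}\,\sigma(q,E) - \mathrm{Re}\,\sigma(0,E) \le e(q)$, so that the band-bottom estimate $\tfrac23\lambda^2$ really governs the threshold even though $\mathrm{Re}\,\sigma$ is larger (of order $\lambda^2$ with a constant exceeding $1$) away from the bottom, e.g.\ at $p_1 = \tfrac12$ where $e$ is large.
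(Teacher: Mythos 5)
Your Task (i) is correct but takes a different route from the paper. The paper uses the ``domain wall'' configuration $\omega_i=1$ for $i\cdot e_1\le -1$, $\omega_i=-1$ for $i\cdot e_1\ge 0$, which makes $\lambda V_\omega$ equal to $-2\lambda$ on the plane $x\cdot e_1=0$ and zero elsewhere; restricting to the $e_1$-direction this is a rank-one perturbation of the one-dimensional Laplacian, and the rank-one condition $\tfrac{1}{2\lambda}=\int_\T \frac{dq}{2\sin^2(\pi q)-E_m}=\frac{1}{\sqrt{-2E_m}}+O(1)$ gives $E_m=-2\lambda^2+O(\lambda^3)$. Your staggered configuration $(-1)^{x_1}$ with the $2\times 2$ Bloch fibre reaches the same leading order, and the passage to the almost-sure statement (approximating the periodic configuration on large boxes, as in the paper's appeal to \cite{Kirsch-89a}) is the same in both versions. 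Either trial configuration does the job.

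The genuine gap is in Task (ii), and it is precisely the ``main obstacle'' you name but do not resolve. Your bootstrap inequality $\mathrm{Re}\,(e(q)-E-i\epsilon-\sigma(q))^{-1}\le e(q)^{-1}$ requires $-E-\mathrm{Re}\,\sigma(q,E)\ge 0$ \emph{uniformly in $q$}, not only at $q=0$; but for $q_1$ near $1/2$ one has $\mathrm{Re}\,\sigma(q,E)\approx \lambda^2\int_{\T^3}4\cos^2(\pi s_1)e(s)^{-1}\,ds\approx 1.35\,\lambda^2$, which exceeds $-E_d=(1+\lambda)\lambda^2$. So the hypothesis of your bootstrap fails away from the band bottom, the chain of inequalities does not close, and the a priori bound $\mathrm{Re}\,\sigma(0,E)\le\tfrac23\lambda^2$ is not established by the argument as written (nor are existence and uniqueness of $\sigma$ on the enlarged energy range, which your argument presupposes). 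The paper closes exactly this loop by a structural observation: since $4\sin^2(\pi(p_1-q_1))=2-2\cos(2\pi p_1)\cos(2\pi q_1)-2\sin(2\pi p_1)\sin(2\pi q_1)$ and the sine term integrates to zero against even $f$, the self-consistent map \eqref{eq:T_d} preserves the two-parameter family $f(p)=A+B\sin^2(\pi p\cdot e_1)$; one then checks it maps $\{|A|<\lambda^2,\ |B|<14\lambda^2\}$ into itself and is a contraction in the norm $|A|+\lambda|B|$. This yields \eqref{eq:sigma_d}, whence $e(q)-E-\mathrm{Re}\,\sigma(q)\ge(1-|B|)e(q)+(-E-|A|)\ge(1-14\lambda^2)e(q)+\lambda^3>0$ uniformly in $q$ — exactly the statement your bootstrap needed — together with its analytic continuation \eqref{eq:sigma_dcomp}, which replaces \eqref{eq:maxEp} in the contour-shift argument of Lemma \ref{lemma:rep}. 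Without this control of the full $q$-dependence of $\sigma$ (or an equivalent bound of the form $\mathrm{Re}\,\sigma(q)-\mathrm{Re}\,\sigma(0)\le(1-c)\,e(q)$), the threshold $E_d$ is not reached.
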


\section{Outline of the proof}
We will establish Theorem \ref{thm:main} using the multiscale
analysis (MSA) method. It requires two inputs: The initial volume
 and Wegner estimates.
\subsection{Diagrammatic expansion}
The first ingredient in the proof of Theorem \ref{thm:main} can be
established in the framework of Feynman graphs perturbation
technique.  Throughout the text various Green functions will appear. We will denote by $G_E(x,y)$ the free Green function, i.e.
\be\label{eq:G(xy)} G_E(x,y)\ := \ \langle x|\,\left(-\Delta/2-E\right)^{-1}\,|y\rangle \,.\ee
We characterize its relevant properties in Appendix \ref{sec:append}. It will be used in some of the proofs as comparison with the full Green function $R_{E+i\epsilon}(x,y)$, defined in \eqref{eq:R}. Whenever it is clear from the context, we will suppress the energy dependence of $R_{E+i\epsilon}$, and just use $R$ (respectively $R(x,y)$) for the full resolvent (the full Green function).

The following representations for a Green function
$R(x,y)$ provide the key technical tool for us:
\begin{lemma}[Decomposition of $R(x,y)$ in ($\mathcal N$) case]\label{lemma:rep_matr}
Let $E_{\mathcal N}^*=-E+E_0/2$, and let
\be\label{eq:deltaE'}\delta_{\mathcal N} \ := \ \sqrt{E_{\mathcal N}^*/3}\,.\ee
 Then for any integer $N$ and energies $E<E_0$ with $E_0$
satisfying \eqref{eq:Esigm} we have the decomposition
\begin{equation}\label{eq:mdec_matr}
R(x,y)\ = \ \sum_{n=0}^{N-1}A_n(x,y) \ + \ \sum_{z\in Z^3}\tilde
A_N(x,z)R(z,y)\,,
\end{equation}
with $A_0(x,y)=R_r(x,y)$ (the latter quantity is defined in \eqref{eq:Rsigma} below), and where the kernels
$A_n,\,\tilde A_N$ satisfy bounds
\begin{eqnarray}
\E\, |A_n(x,y)|^2& \le & (4n)!\,  E_{\mathcal N}^*\,\left(C(E_{\mathcal N}^*)\
\frac{\lambda^{2}}{\sqrt{E_{\mathcal N}^*}}\right)^n\,e^{-\delta_{\mathcal N}\,|x-y|}\,,\
 \ n\ge1 \,;\label{eq:l_matr}
\\
\E\, |\tilde A_N(x,y)|& \le & \,\sqrt{(4N)!}\,\left(C(E_{\mathcal N}^*)\
\frac{\lambda^{2}}{\sqrt{E_{\mathcal N}^*}}\right)^{N/2}\,e^{-\delta_{\mathcal N}|x-y|/2}\,,\
 \ N>1 \,;\label{eq:lt_matr}
\end{eqnarray}  where $C(E_{\mathcal N}^*)=K\ |\hat \Theta|\|{\mathcal D}\|\ln^{9}(E_{\mathcal N}^*)$ and $K$ is  some generic constant.

The zero order contribution $A_0$ satisfies
\begin{multline}\label{eq:stand_matr}
|A_0(x,y)|\ \le \
G_{E_{\mathcal N}^*}(x,y) \ \le \ C\,e^{-|x-y|\frac{\delta_{\mathcal N}}{3\sqrt 3}} \,
\max\left(\sqrt{E_{\mathcal N}^*}\,,\,(1+|x-y|)^{-1}\right)
\end{multline}
for all $x,y\in\Z^3$.
\end{lemma}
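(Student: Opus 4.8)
The starting point is a finite resolvent (Duhamel) expansion around the renormalized propagator $R_r=(-\Delta/2-E-i\eps-\Sigma)^{-1}=A_0$ of \eqref{eq:Rsigma}, where $\Sigma$ is the $k\Z^3$-periodic self-energy operator built in \eqref{eq:Sigm} from the solution of \eqref{eq:self'}. Writing $H^\lambda_\omega-E-i\eps=(-\Delta/2-E-i\eps-\Sigma)+W$ with the perturbation $W:=\lambda V_\omega+\Sigma$, the second resolvent identity gives $R=R_r-R_rWR$; setting $T:=-R_rW$ this reads $R=R_r+TR$, and iterating $N$ times produces exactly \eqref{eq:mdec_matr} with $A_n=T^nR_r=(-R_rW)^nR_r$ and remainder kernel $\tilde A_N=(-R_rW)^N$, so that the last term is $\sum_z\tilde A_N(x,z)R(z,y)$. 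Thus the entire content of the lemma is the pair of probabilistic bounds \eqref{eq:l_matr}, \eqref{eq:lt_matr} on these \emph{explicit} kernels, together with the deterministic estimate \eqref{eq:stand_matr} on $A_0$.

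The deterministic bound is the easy half. Since $\norm{\Sigma}\le 2n\lambda^2\norm{u}_\infty^2$ by \eqref{eq:propsigm} and $\Sigma$ carries the dissipative/sign structure established in Appendix \ref{sec:appendI}, the shift $-E-i\eps-\Sigma$ keeps the effective energy below the band $[0,6]$ at distance at least $E_{\mathcal N}^*=-E+E_0/2>0$. A comparison argument then yields the pointwise domination $\abs{A_0(x,y)}=\abs{R_r(x,y)}\le G_{E_{\mathcal N}^*}(x,y)$, after which the free lattice Green function estimate of Appendix \ref{sec:append} (exponential decay at rate controlled by $\delta_{\mathcal N}=\sqrt{E_{\mathcal N}^*/3}$, with the $d=3$ on-diagonal size $\sqrt{E_{\mathcal N}^*}$ and the $(1+\abs{x-y})^{-1}$ tail) delivers \eqref{eq:stand_matr} directly.

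The heart of the proof is \eqref{eq:l_matr}. Expanding $\E\abs{A_n(x,y)}^2=\E\big(A_n(x,y)\,\overline{A_n(x,y)}\big)$ produces a sum of diagrams in which $2n$ factors of $W=\lambda V_\omega+\Sigma$ are interleaved with the deterministic propagators $R_r$. Since the $\omega_i$ are i.i.d. with mean zero and unit variance, the expectation is computed by pairing the random vertices; the decisive point is that the self-consistent choice \eqref{eq:self'} is engineered so that the $\Sigma$-insertions cancel precisely the tadpole (self-contraction) subdiagrams, which are the terms with the worst behaviour as $\eps\to0$. What survives are the genuinely paired (irreducible) diagrams, in which each of the $\sim n$ contractions contributes a variance factor $\lambda^2$ and a propagator loop whose weight is governed by the $\ell^2$-bound $\sum_z\abs{R_r(x,z)}^2\sim (E_{\mathcal N}^*)^{-1/2}$; this sum is finite and small precisely because in $d=3$ the band-bottom singularity of $(-\Delta/2)^{-1}$ is integrable. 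This mechanism produces the factor $\big(C(E_{\mathcal N}^*)\,\lambda^2/\sqrt{E_{\mathcal N}^*}\big)^n$, while the two dangling endpoint propagators at $x$ and $y$ fix the overall scale $E_{\mathcal N}^*$, and the number of admissible pairings of the vertices, inflated by the cell/matrix indices $x_i\in\hat\Theta$ of the $(\mathcal N)$ setting, is what is responsible for the combinatorial prefactor $(4n)!$. The exponential weight $e^{-\delta_{\mathcal N}\abs{x-y}}$ is extracted by a Combes--Thomas conjugation $R_r\mapsto e^{\theta\cdot x}R_r\,e^{-\theta\cdot x}$ with $\abs{\theta}\lesssim\delta_{\mathcal N}$, under which the boosted propagators retain their $\ell^2$-bounds, so the decay of the connected chain of propagators joining $x$ to $y$ survives. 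The remainder estimate \eqref{eq:lt_matr} then follows from the same diagrammatic bound applied to $\tilde A_N$ (a kernel of the same order $N$) at the level of the second moment, followed by a single Cauchy--Schwarz/Jensen step: the square root turns $(4N)!$ into $\sqrt{(4N)!}$, the power $N$ into $N/2$, and $e^{-\delta_{\mathcal N}\abs{x-y}}$ into $e^{-\delta_{\mathcal N}\abs{x-y}/2}$, matching \eqref{eq:lt_matr} exactly.

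The main obstacle is the diagrammatic estimate \eqref{eq:l_matr}: one must set up a precise bookkeeping of the Feynman graphs for the \emph{matrix-valued} self-energy, verify that \eqref{eq:self'} annihilates exactly the tadpole contributions, and then bound the surviving irreducible diagrams uniformly in $\eps$ while tracking the factorial counting and the exponential decay simultaneously. This is the step that parallels --- and here must be generalized to the non-overlapping cell geometry of case $(\mathcal N)$, with $\mathcal D$ and the primitive cell $\hat\Theta$ entering through $C(E_{\mathcal N}^*)$ --- the analysis of \cite{elgart}.
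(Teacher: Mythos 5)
There is a genuine gap, and it sits exactly at the point you identify as decisive. You define the decomposition by the plain Neumann grouping $A_n=(-R_r\tilde V)^nR_r$, $\tilde A_N=(-R_r\tilde V)^N$ with $\tilde V=\lambda V_\omega+\Sigma$. With that grouping the tadpole cancellation you invoke does \emph{not} take place inside a fixed $A_n$: contracting an adjacent pair of $\omega$'s in a word containing $n$ factors of $\lambda V_\omega$ produces (via \eqref{eq:self'}) a term of the form $R_r\cdots R_r\,\Sigma\,R_r\cdots R_r$ with $n-1$ perturbation slots, i.e.\ a term that lives in $A_{n-1}$ under your bookkeeping, so the counterterm and the tadpole never meet. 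Concretely, already for $n=2$ one has $\E\,A_2^{\rm naive}=R_r\Sigma R_r+R_r\Sigma R_r\Sigma R_r\neq 0$, and $\E|A_2^{\rm naive}(x,x)|^2\gtrsim \lambda^4\,(E_{\mathcal N}^*)^{-1}$, which violates \eqref{eq:l_matr} by a factor $(E_{\mathcal N}^*)^{-1}\sim\lambda^{-4+\nu}$ in the regime where the lemma is actually used. The paper avoids this by the \emph{stopping rule} of Lemma \ref{lem:de}: the expansion is regrouped by total $\lambda$-order, each $\Sigma$ counting as order $2$ and each $\lambda V_\omega$ as order $1$, so that $A_l=A'_lR_r$ collects all words of order exactly $l$ and the identity \eqref{eq:tadp_canc} then kills every tadpole within each $A_l$; the remainder is $\tilde A_N=A'_N-A_{N-1}\sigma$, not $(-R_r\tilde V)^N$. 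Without this regrouping the stated bounds are simply false for the kernels you defined, so this is a missing idea rather than a presentational difference.

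The rest of your outline is broadly consistent with the paper, with two smaller deviations worth noting. First, in case $(\mathcal N)$ the exponential weight is not obtained by a Combes--Thomas conjugation: the paper proves the pointwise domination $|R_r(x,y)|\le G_{E_{\mathcal N}^*}(x,y)$ (Lemma \ref{lem:propSigma}, via a Neumann series and the positivity/ratio bounds on the free Green function), passes to the block norms $\|P_{x_i}R_rP_{x_{i+1}}\|_1\le|\hat\Theta|\,G_{E_{\mathcal N}^*}(x_i,x_{i+1})$, and then extracts the decay from property (d) of $\psi_\alpha$ together with the already-established estimate of \cite{elgart} for the scalar case; your Combes--Thomas route could plausibly be made to work but is not needed and is not what supplies the $\ln^9$ and $\lambda^2/\sqrt{E_{\mathcal N}^*}$ structure here. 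Second, the remainder bound \eqref{eq:lt_matr} is derived from the identity $\tilde A_N=A_N(H_r-E-i\epsilon)-A_{N-1}\Sigma$ plus Cauchy--Schwarz and the finite range of $H_r$ and $\Sigma$, which is the precise version of the ``square root'' step you describe.
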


\par

We now formulate the parallel result for the case (${\mathcal O}$). To this end, we introduce some additional notation first.
For the parameter $E_{\mathcal O}^*$ that satisfies
$-E+E_0^{\mathcal O}>E_{\mathcal O}^*>0$ with $E_0^{\mathcal O}$
defined in \eqref{eq:E_0}, we set
\be\label{eq:deltaE}\delta_{\mathcal O} \ := \ \frac{ \sqrt{E_0^{\mathcal O}-E-E_{\mathcal O}^*}}{ \sqrt 6 \pi}\,.\ee
%
\begin{lemma}[Decomposition of $R(x,y)$ in ($\mathcal O$) case]\label{lemma:rep}
 For any integer $N$ and energies $E<E_0$ with $E_0$
satisfying \eqref{eq:E_0} we have the decomposition
\begin{equation}\label{eq:mdec}
R(x,y)\ = \ \sum_{n=0}^{N-1}A_n(x,y) \ + \ \sum_{z\in Z^3}\tilde
A_N(x,z)R(z,y)\,,
\end{equation}
with $A_0(x,y)=R_r(x,y)$ (the latter kernel is defined in \eqref{eq:R_r_O} below), and where the (real valued) kernels
$A_n,\,\tilde A_N$ satisfy bounds
\begin{eqnarray}
\E\, |A_n(x,y)|^2\ \le \ (4n)!\, E_{\mathcal O}^*\,\left(C(E_{\mathcal O}^*)\
\frac{\lambda^{2}}{\sqrt{E_{\mathcal O}^*}}\right)^n\,e^{-\delta_{\mathcal O}\,|x-y|}\,,\
 \ n\ge1 \,;\label{eq:l}
\\
\E\, |\tilde A_N(x,y)|\ \le \ \sqrt{(4N)!}\,\left(C(E_{\mathcal O}^*)\
\frac{\lambda^{2}}{\sqrt{E_{\mathcal O}^*}}\right)^{N/2}\,e^{-\delta_{\mathcal O}|x-y|/2}\,,\
 \ N>1 \,;\label{eq:lt}
\end{eqnarray}  where
\[C(E_{\mathcal O}^*)\ = \ K \left(\|\hat u\|_{\infty}\,+\, A^{-4} {\delta_{\mathcal O}}\right)\,\ln^{9}(E_{\mathcal O}^*)\]
for some generic constant $K$ and $A$ being a parameter introduced in \eqref{eq:condonu}.

The zero order contribution $A_0$ satisfies
\begin{equation}\label{eq:stand}
 |A_0(x,y)|\ \le \ 2 \,e^{-\frac{\delta_{\mathcal O}}{3\sqrt 3}|x-y|}
\end{equation}
for all $x,y\in\Z^3$.
\end{lemma}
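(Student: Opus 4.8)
\emph{Proof strategy.} The plan is to set up a renormalized Born series for $R$ around the self-energy-corrected free resolvent and to control its terms by a single second-moment (diagrammatic) estimate, paralleling the $(\mathcal N)$ argument of Lemma \ref{lemma:rep_matr} and generalizing the delta-potential treatment of \cite{elgart}. Let $R_r$ denote the renormalized free resolvent, i.e. the operator whose Fourier multiplier is $(e(p)-E-i\epsilon-\sigma(p,E+i\epsilon))^{-1}$ with $\sigma$ solving \eqref{eq:self}; this is the kernel $A_0$ of \eqref{eq:R_r_O}. Since $H^\lambda_\omega-E-i\epsilon$ differs from $R_r^{-1}$ precisely by the operator $\Sigma+\lambda V_\omega$ (with $\Sigma$ the multiplier $\sigma$), the second resolvent identity gives $R=R_r+R_rKR$ with $K:=-(\Sigma+\lambda V_\omega)$. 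Iterating $N$ times reproduces the decomposition \eqref{eq:mdec}, with
\[
A_n \ = \ (R_rK)^nR_r\,,\qquad \tilde A_N\ =\ (R_rK)^N\,.
\]
Thus both bound families reduce to one second-moment estimate on the products $(R_rK)^n$; the first-moment bound \eqref{eq:lt} on $\tilde A_N$ is then recovered by Cauchy--Schwarz, which explains the square root $\sqrt{(4N)!}$, the exponent $N/2$, and the halved rate $\delta_{\mathcal O}/2$.

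I would first establish the pointwise decay \eqref{eq:stand} of $A_0=R_r$. Because $E<E_0^{\mathcal O}<0$ and, by the Appendix \ref{sec:appendI} bounds, $\|\sigma\|_\infty$ is small, the symbol $e(p)-\sigma(p)-E-i\epsilon$ stays bounded away from zero and extends analytically in each variable to a strip $|\mathrm{Im}\,p|<\eta$. Shifting the Fourier inversion contour (a Paley--Wiener/Combes--Thomas argument) yields exponential decay of $R_r(x,y)$ at the rate controlled by $\delta_{\mathcal O}$ of \eqref{eq:deltaE}, and comparison with the free kernel $G_E$ of Appendix \ref{sec:append} produces the explicit constant $2$ and rate $\delta_{\mathcal O}/(3\sqrt3)$.

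The heart of the matter is the estimate $\E|(R_rK)^nR_r(x,y)|^2\le (4n)!\,E_{\mathcal O}^*(C(E_{\mathcal O}^*)\lambda^2/\sqrt{E_{\mathcal O}^*})^n e^{-\delta_{\mathcal O}|x-y|}$. Expanding each $K=-(\Sigma+\lambda V_\omega)$ and inserting $V_\omega(x)=\sum_i\omega_i u(x-i)$ turns $(R_rK)^nR_r$ into an alternating chain of $R_r$ links and vertices carrying either a deterministic $\sigma$ factor or a random $\lambda\omega_i$ factor. Forming $|\cdot|^2$ and taking $\E$, independence and the evenness of $\rho$ force every index $i$ to occur an even number of times; bounding the number of such contraction patterns by $(4n)!$ and each contracted pair by $\lambda^2\,\E\omega^2=\lambda^2$ together with the covariance $\sum_i u(\cdot-i)u(\cdot-i)$ (Fourier symbol $|\hat u|^2$) yields the power $(\lambda^2/\sqrt{E_{\mathcal O}^*})^n$. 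The key point is that $\Sigma$ was defined through \eqref{eq:self} exactly so that the reducible tadpole subdiagrams, which are dangerous in the Lifshitz regime because $R_r$ is large near $p=0$, are cancelled by the explicit $\Sigma$ vertices, leaving only irreducible diagrams. Spatial decay propagates through the chain: each $R_r$ link carries $e^{-\delta_{\mathcal O}|\cdot|}$, the decay \eqref{eq:condonu} of $u$ keeps each smeared vertex local, and summation over internal sites reproduces $e^{-\delta_{\mathcal O}|x-y|}$ with loss only in the constant. The $\ell^2$ and convolution norms of $R_r$ governing these sums, estimated in Appendix \ref{sec:append}, supply the $E_{\mathcal O}^*$ normalization and the $\ln^9(E_{\mathcal O}^*)$ factor in $C(E_{\mathcal O}^*)$, while the $A^{-4}$ term records the cost of the smeared (overlapping) vertices.

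The main obstacle is the simultaneous control of three competing effects in this bookkeeping: (i) verifying rigorously that the self-consistent $\Sigma$ removes all tadpole contributions so that the surviving diagrams are summable; (ii) keeping the proliferation of diagrams within the stated factorial $(4n)!$, which is harmless only because it is eventually beaten by $\lambda^{2n}$ over the admissible range of $n$; and (iii) ensuring the decay rate survives the nested lattice summations and the smearing by $u$ without dropping below $\delta_{\mathcal O}$. The genuinely new difficulty relative to \cite{elgart} is (iii) in the overlapping case, where $u$ delocalizes each vertex; this is precisely what the exponential bound \eqref{eq:condonu} is invoked to tame.
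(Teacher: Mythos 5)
There is a genuine gap, and it sits at the very first step: your definition $A_n=(R_rK)^nR_r$ with $K=-(\Sigma+\lambda V_\omega)$ is not the decomposition the lemma requires, and the bound \eqref{eq:l} actually fails for it. The point of the renormalization is that a tadpole contraction of two adjacent random vertices (contributing $\lambda^2\E\,\omega^2$ and a factor $|\hat u|^2$) must cancel against a term in which those two vertices are replaced by a single $\sigma$ insertion. With your grouping these two terms live in \emph{different} $A_n$'s: the tadpole sits in $(R_rK)^{m}R_r$ while the compensating $\sigma$ sits in $(R_rK)^{m-1}R_r$. Concretely, for $n=2$ your $\E A_2$ contains the uncancelled piece $\lambda^2\,\E[R_rV_\omega R_rV_\omega R_r]\supset R_r\sigma R_r$ (by the self-consistency \eqref{eq:self}), whose diagonal value is of size $\lambda^2\int d^3p\,|E(p)|^{-2}\sim\lambda^2/\sqrt{E^*_{\mathcal O}}$; squaring gives $\lambda^4/E^*_{\mathcal O}\sim\lambda^{\nu}$ at $E^*_{\mathcal O}=\lambda^{4-\nu}/2$, far larger than the claimed $(4n)!\,E^*_{\mathcal O}(C\lambda^2/\sqrt{E^*_{\mathcal O}})^2\sim\lambda^4$. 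The paper avoids this with a \emph{stopping rule} (Section \ref{sec:renorm}, Lemma \ref{lem:de}): the expansion is truncated contribution-by-contribution so that $A'_l$ collects all strings $R_r\theta R_r\cdots\theta R_r$ of fixed \emph{order} $l$, where each $\sigma$ counts as $2$ and each $\lambda V_\omega$ as $1$. Only with this grouping does the combinatorial identity \eqref{eq:tadp_canc} cancel every tadpole \emph{inside} a single $\E|A_l|^2$, leaving the tadpole-free sum that is then estimated. Your proposal asserts the cancellation but supplies no mechanism for it to happen term by term, and with the grouping you chose it cannot.

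A secondary concern is the propagation of exponential decay. You propose to thread $e^{-\delta_{\mathcal O}|\cdot|}$ through each $R_r$ link in real space and sum over internal sites; each such lattice summation costs a factor of order $\delta_{\mathcal O}^{-3}$ (or degrades the rate), which after $O(n)$ vertices destroys the $(C(E^*_{\mathcal O})\lambda^2/\sqrt{E^*_{\mathcal O}})^n$ structure. The paper instead works entirely in momentum space: it shifts the contour of a \emph{single} loop momentum $w_1$ into the rectangle $\mathcal R$ (using analyticity of both $1/E(p)$ and $\hat u$ there, which is where \eqref{eq:condonu} and the $A^{-4}\delta_{\mathcal O}$ term in $C(E^*_{\mathcal O})$ enter), extracts one global factor $e^{-\sqrt{\delta/3}\,|x-y|}$, and then bounds the remaining tadpole-free integral by the quantity $\hat{\mathcal A}_{l,E^*}(0)$ already controlled in \cite{elgart}. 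Your Combes--Thomas treatment of $A_0$ and the Cauchy--Schwarz route from second moments of $A_N,A_{N-1}$ to the first-moment bound \eqref{eq:lt} (via $\tilde A_N=A_N(H_r-E-i\epsilon)-A_{N-1}\Sigma$) are consistent with the paper, but without the stopping-rule decomposition and the single-contour-shift argument the central estimates do not go through.
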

\begin{remark}
There is a certain balance between the parameters $E^*_{\mathcal O}$ and $\delta_{\mathcal O}$ in the above assertion. Namely, increasing the former improves on the prefactor in front of the exponential decay in \eqref{eq:l}, but since it decreases the latter, the control over the rate of the exponential decay becomes poorer.
\end{remark}
\begin{remark} The representations \eqref{eq:mdec_matr} and \eqref{eq:mdec} are  resolvent type expansions (see Lemma \ref{lem:de} below for details). If one applies the rough norm bound
on each factor of the resolvent there, the denominator in
\eqref{eq:l_matr} and \eqref{eq:l} will contain $E^*$ rather than its square root. The
improvement is achieved using the Feynman diagramatic technique
(Section \ref{sec:repr}).
\end{remark}
\par
One then looks for the optimal value $N$ to stop the corresponding expansion -
note that the increasing  factor of $(4N)!$ in $A_N(x,y)$ competes
with the decreasing factor $(\lambda^4 E^*)^{N/2}$.

The choice $E^*=\lambda^{4-\nu}/2$ has the effect that
\begin{equation}\label{eq:refe}C(E^*)\ \frac{\lambda^{2}}{\sqrt{E^*}} \ \le \
\lambda^{B\nu}\,,\quad 0 < B < 1\,,\end{equation}
which
suffices to control \eqref{eq:l_matr} -- \eqref{eq:lt_matr} (respectively \eqref{eq:l} -- \eqref{eq:lt}). We note that in the range of energies $E<E_0^{\mathcal O}-\lambda^{4-\nu}$, the above choice for $E^*_{\mathcal O}$ implies $\delta>\lambda^{4-\nu}/(2\pi)^2$. It turns out that
the appropriate choice for N should satisfy
\[(4N)!\,\left(\frac{C(E^*)\lambda^2} {\sqrt{E^*}}\right)^{N}\approx
e^{-4N}\] (see the next section for details). In terms of the
$\lambda$ - dependence, it corresponds to
$N\sim\lambda^{-b\nu}$ for $b < B$.
\subsection{Wegner estimate}
The initial volume estimate that enters into MSA requires us to get
rid of the imaginary part $\epsilon$ of the energy, present in the
formulation of Lemma \ref{lemma:rep}. To this end, we will use the
Wegner estimate below, that itself is an important ingredient of
MSA. It will be established using the idea of F. Klopp \cite{Klopp}.
While it has the correct (linear)  dependence on the length of the interval $I$, 
the dependence on the volume of the estimate below is not
optimal. The optimal, linear dependence on $\Lambda$ for the
continuum models with absolute continuous density $\rho$ was developed in \cite{HK}, but the trade-in is that the $I$-dependence in the Wegner estimate thereof is worse than ours. The derivation below has an advantage of being  completely
elementary and the estimate itself is sufficient for our purposes.
\begin{thm}[Wegner estimate for H\"older continuous densities]\label{lem:Wegner}
Let $I$ be an open interval of energies such that
\[D_I\ :=\ \dist(I,\sigma(-\Delta/2))\ >\ 0\,.\]
Then  for either $\mathcal O$ or $\mathcal N$ cases we have
\be \label{eq:wegn} \E\,\tr P_I(H_\omega^{\Lambda,\lambda}) \ \le  \
C\, |I|\, |\Lambda|^{\frac{1+\alpha}{\alpha}}\,(D_I)^{-1}\,, \ee
where  $H_\omega^{\Lambda,\lambda}$ denotes a natural restriction of $H_\omega^{\lambda}$ to $\Lambda\subset\Z^3$,  the constant $\alpha$ is defined in ($\mathcal A$), and the constant $C$ depends on $J$ and $K$.
\end{thm}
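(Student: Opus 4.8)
The plan is to bound the expected number of eigenvalues of $H_\omega^{\Lambda,\lambda}$ in $I$ by an averaged resolvent, and to let the spectral gap $D_I>0$ play the role that monotonicity plays in the usual Combes--Hislop argument (which is unavailable here because the single-site term is sign-indefinite). Writing $I=(E_0-\eta,E_0+\eta)$ with $|I|=2\eta$ and $z=E_0+i\eta$, the elementary pointwise bound $\1_I(x)\le 2\eta\,\Im (x-z)^{-1}$ together with the spectral theorem gives
\[
\tr P_I(H_\omega^{\Lambda,\lambda})\ \le\ 2\eta\,\Im\,\tr (H_\omega^{\Lambda,\lambda}-z)^{-1}.
\]
Taking the expectation, the theorem follows once we establish
\[
\E\,\Im\,\tr (H_\omega^{\Lambda,\lambda}-z)^{-1}\ \le\ C\,|\Lambda|^{(1+\alpha)/\alpha}\,(D_I)^{-1}
\]
\emph{uniformly} in $\eta=\Im z>0$, since inserting $2\eta=|I|$ then reproduces \eqref{eq:wegn}. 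The point of passing to the resolvent is that the averaging gains a full power of $\eta$, which is precisely what makes the $|I|$-dependence linear; consequently all the work is in the uniform-in-$\eta$ control of the averaged resolvent.

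For the latter I would write $\Im\,\tr (H_\omega^{\Lambda,\lambda}-z)^{-1}=\sum_{x\in\Lambda}\Im\,(H_\omega^{\Lambda,\lambda}-z)^{-1}(x,x)$, so that the summation over the $|\Lambda|$ diagonal entries supplies one power of $|\Lambda|$ and it remains to bound each $\E\,\Im\,(H_\omega^{\Lambda,\lambda}-z)^{-1}(x,x)$ by a quantity of order $|\Lambda|^{1/\alpha}(D_I)^{-1}$, uniformly in $\eta$. This is where Klopp's idea enters. In the monotone case one integrates by parts in a single coupling $\omega_i$ using $\partial_{\omega_i}H_\omega^{\Lambda,\lambda}=\lambda\,U_i\ge0$, with $U_i$ multiplication by $u(\cdot-i)$, and closes the estimate with a loss of only $\|\rho\|_\infty$. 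Here $U_i$ changes sign, so this fails. Instead, following Klopp, I would exploit that, because $E_0\in I$ lies at distance at least $D_I$ from $\sigma(-\Delta/2)$, the finite-volume free resolvent obeys $\|(-\Delta/2\,|_\Lambda-z)^{-1}\|\le D_I^{-1}$; this bound is used to construct, in the space of couplings $\{\omega_i\}$, a direction along which the relevant part of the spectrum moves with a speed controlled from below (by $D_I^{-1}$), effectively restoring monotonicity. One then averages along that direction and estimates the pushforward of the product density $\prod_i\rho(\omega_i)$; since the construction necessarily couples all $\sim|\Lambda|$ variables, propagating the $\alpha$-H\"older modulus of $\rho$ through them is what generates the factor $|\Lambda|^{1/\alpha}$ and the dependence of $C$ on $K$ and on the support $J$. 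The two cases $(\mathcal O)$ and $(\mathcal N)$ are handled identically, since only the gap $D_I$, the uniform bound $\|V_\omega\|\le C(J)$, and the regularity of $\rho$ are used.

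The main obstacle is exactly this non-monotone spectral averaging: constructing the good direction from the resolvent bound and carrying out the change of variables so that the H\"older estimate can be applied with a Jacobian controlled purely by $D_I^{-1}$, while keeping track of how the $|\Lambda|$ couplings contribute. Subsidiary but routine points are the verification that the chosen restriction $H_\omega^{\Lambda,\lambda}$ (e.g.\ with Dirichlet conditions) preserves the gap, i.e.\ $\dist(z,\sigma(-\Delta/2\,|_\Lambda))\ge D_I$, and the uniform bound on $\|V_\omega\|$ coming from the exponential decay of $u$ and the compact support of $\rho$; neither of these requires any sign condition on $u$, which is what allows the estimate to persist for non-monotone single-site potentials.
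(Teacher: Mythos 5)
Your overall strategy is the right one --- this is exactly Klopp's fictitious--dilation spectral averaging, which is what the paper uses, and you correctly identify that the gap $D_I$ must substitute for the missing monotonicity and that propagating the H\"older modulus of $\rho$ through the $|\Lambda|$ couplings is what produces the exponent $\frac{1+\alpha}{\alpha}$. (The detour through $\Im\tr(H-z)^{-1}$ is harmless but unnecessary; the paper works directly with $\tr P_I$.) However, there is a genuine gap at the heart of your argument: you assert that the resolvent bound $\|(-\Delta/2|_\Lambda-z)^{-1}\|\le D_I^{-1}$ ``is used to construct a direction along which the spectrum moves with speed controlled from below,'' but you never construct it, and the naive candidate fails. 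If you dilate the couplings, $\omega\mapsto v\omega$, the generator of the motion is $\partial_v(v\lambda V_\omega)=\lambda V_\omega$, which is still sign-indefinite --- so the dilation direction is \emph{not} obviously monotone, and nothing in your sketch explains how the gap enters. This is the whole difficulty of the theorem, and flagging it as ``the main obstacle'' does not discharge it.

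The missing idea is a rescaling of the energy window that transfers the dilation from the random part onto the free part. Writing $B:=-\tfrac12\Delta-E$ (so that $\lambda_1(B)=D_I>0$ and $\lambda_n(B)\le 6+D_I$ by the gap hypothesis), one has the exact identity
\begin{equation*}
P_{I_0}\bigl(B+v\lambda V_\omega^\Lambda\bigr)\ =\ P_{v^{-1}I_0}\bigl(v^{-1}B+\lambda V_\omega^\Lambda\bigr)\ \le\ P_{I_0}\bigl(v^{-1}B+\lambda V_\omega^\Lambda\bigr)\qquad (v\ge 1),
\end{equation*}
so after the substitution $x=v^{-1}$ the varying parameter multiplies the \emph{sign-definite} operator $B$, not $V_\omega$. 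At that point a two-sided Weyl inequality shows that each eigenvalue branch $x\mapsto\lambda_k(\lambda V_\omega^\Lambda+xB)$ moves with speed at least $\lambda_1(B)=D_I$, whence $\int_J\tr P_{I_0}(\lambda V_\omega^\Lambda+xB)\,dx\le D_I^{-1}|I|\,\tr P_{\hat I}(\lambda V_\omega^\Lambda)\le D_I^{-1}|I|\,|\Lambda|$; this is the quantitative ``controlled speed'' statement your sketch presupposes (it is also exactly the projector-level form of the uniform-in-$\eta$ resolvent bound you want). The remaining bookkeeping --- comparing $\prod_i\rho(v\hat\omega_i)$ with $\prod_i\rho(\hat\omega_i)$ via a $K$-Lipschitz approximant $\rho_K$ with $\|\rho-\rho_K\|_\infty\le CK^{-\alpha/(1-\alpha)}$, and optimizing $K$ and the length $\delta$ of the $v$-interval against $|\Lambda|$ --- is routine once this structure is in place, and it is there that $|\Lambda|^{(1+\alpha)/\alpha}$ and the dependence of $C$ on $J,K$ arise, as you anticipated.
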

\begin{remark} The dependence on volume $\Lambda$ blows up as $\alpha\searrow 0$.
Thus the technique doesn't work when $\rho$ is concentrated in finitely many points. See
Bourgain and Kenig \cite{BK} for the Bernoulli alloy type model, where $\rho$ is concentrated
on $\{0,1\}$, in continuous settings.
\end{remark} 
The rest of the paper is organized as follows:  We derive the main results based on the technical statements above in Section \ref{sec:proofs}. We then describe the procedure which allows us to have a tighter control over the resolvent expansion in Section \ref{sec:renorm}. We prove Lemmas \ref{lemma:rep_matr} and \ref{lemma:rep} in Section \ref{sec:repr}.  Auxillarly technical statements
 are collected in Appendices \ref{sec:append} and \ref{sec:appendI}.

\section{Proofs of the main results}\label{sec:proofs}
\begin{proof}[Proof of Theorem \ref{thm:main}] 
\hspace{7cm} \vskip 3mm \noindent
The proofs of the auxiliary statements, namely Theorem \ref{lem:Wegner} (accordingly Lemmas \ref{lemma:rep_matr}, \ref{lemma:rep}), will be postponed until later in this section  (until Section \ref{sec:repr}). 
\vskip 3mm
\noindent We prove the assertion simultaneously
for ($\mathcal N$) and ($\mathcal O$) cases, so we are going to drop
subscripts $\mathcal N$ or $\mathcal O$ until the very end of the
proof.  Let us denote by $H_\omega^{\Lambda,\lambda}$ the natural
restriction of $H_\omega^{\lambda}$ to $\Lambda\subset\Z^3$, namely,
$H_\omega^{\Lambda,\lambda}(i,j)=H_\omega^\lambda(i,j)$ if $(i,j)\in\Lambda
\times\Lambda$ and $H_\omega^{\Lambda,\lambda}(i,j)=0$ otherwise. Let
$\Lambda^c:=\Z^3\setminus\Lambda$, and let $\partial\Lambda=\{i\,|\,\exists j\,\rm{s.t.}\,
(i,j)\in\Lambda\times\Lambda^c,\,\rm{dist}(i,j)=1\}$ be the
boundary of the set $\Lambda$.  We define the decoupled Hamiltonian
$H_\Lambda$ to be
\[H_\Lambda=H_\omega^{\Lambda,\lambda}\oplus H_\omega^{\Lambda^c,\lambda}\,,\]
and will denote by $R_\Lambda(E)$ the corresponding resolvent (i.e.
$R_\Lambda(E)=(H_{\Lambda}-E)^{-1}$). For $L >0$ and $x \in \Z^d$ we
denote by $\Lambda_{L,x} = \{y \in \Z^d : \lvert x-y \rvert_\infty
\leq L\}$ the cube of side length $2L$. Our first objective is to
derive the bound for $\E \lvert
R_{\Lambda_{L,x}}(E+i\epsilon;x,w)\rvert$, for $w\in
\partial\Lambda$. To this end, we observe
\begin{multline}\label{eq:ini}
\E\lvert R_{\Lambda_{L,x}} (E+i\epsilon;x,w) \rvert \\ \le \
\E\lvert R(E+i\epsilon;x,w) \rvert \,+\, \E\lvert R_{\Lambda_{L,x}}
(E+i\epsilon;x,w) \,-\,R(E+i\epsilon;x,w)\rvert\\ = \ \E\lvert
R(E+i\epsilon;x,w) \rvert \,+\, \E\lvert \left(R
(H_\omega-H_{\Lambda_{L,x}})\,R_{\Lambda_{L,x}}\right)(x,w)\rvert
\\ \le \
\E\lvert R(E+i\epsilon;x,w) \rvert \\ + \ \E
\sum_{k\in\partial\Lambda_{L,x}^c}\,\lvert
R(E+i\epsilon;x,k)\rvert\,\lvert \left(R_{\Lambda_{L,x}}
(H_\omega-H_{\Lambda_{L,x}})\, \right)(k,w)\rvert\,.
\end{multline}
We can estimate
\begin{equation}\label{eq:diff}
\sum_{k\in\partial\Lambda_{L,x}^c}\lvert \left(R_{\Lambda_{L,x}}
(H_\omega-H_{\Lambda_{L,x}})\, \right)(k,w)\rvert \ \le \
\frac{C_0}{\epsilon}\, \lvert\partial\Lambda\rvert \ = \
C\,\frac{L^2}{\epsilon}\,,
\end{equation}
hence
\begin{equation}\label{eq:boxgr}
\E\lvert R_{\Lambda_{L,x}} (E+i\epsilon;x,w) \rvert \ \le\
C\,\frac{L^2}{\epsilon}\, \max_{k\in\partial\Lambda_{L,x}^c}\,
\E\lvert R(E+i\epsilon;x,k)\rvert\ \,.
\end{equation}
On the other hand, for any $k\in\partial\Lambda_{L,x}^c$, we have
${\rm dist}(x,k)=L+1$, and Lemma \ref{lemma:rep_matr}
(respectively Lemma \ref{lemma:rep}) ensures that
\begin{multline}\label{eq:glob}
\E\lvert R(E+i\epsilon;x,k)\rvert \ \le \ \sum_{n=0}^{N-1}\E\lvert
A_n(x,k)\rvert \, + \, \sum_{z\in Z^3}\E\lvert \tilde
A_N(x,z)R(E+i\epsilon;z,k)\rvert\\
\le \ \sum_{n=0}^{N-1}\Big\{\E\lvert A^2_n(x,k)\rvert \Big\}^{1/2}\,+\,\frac{1}{\epsilon}\,
\sum_{z\in Z^3}\E\lvert \tilde
A_N(x,z)\rvert\\ \le \ \sum_{n=0}^{N-1}\sqrt{(4n)!}\,
\sqrt{E^*}\,\left(C(E^*)\
\frac{\lambda^{2}}{\sqrt{E^*}}\right)^{n/2}\,e^{-\delta\,|x-k|/2}\\
+ \ \sum_{z\in Z^3}\frac{\sqrt{(4N)!}}{\epsilon}\, \left(C(E^*)\
\frac{\lambda^{2}}{\sqrt{E^*}}\right)^{N/2}\,e^{-\delta\,|x-z|/2}\\
\le \ e^{-\delta L/2}\sum_{n=0}^{N-1}\sqrt{E^*(4n)!}\, \left(C(E^*)\
\frac{\lambda^{2}}{\sqrt{E^*}}\right)^{n/2}\\ +\ C\,
\frac{\sqrt{(4N)!}}{\epsilon}\,\delta^{-1} \left(C(E^*)\
\frac{\lambda^{2}}{\sqrt{E^*}}\right)^{N/2}
\end{multline}
Choosing
\be\label{eq:choiceN}
(4N)^4\ =\ \frac{\sqrt{E^*}}{C(E^*)\,\lambda^2} \,,\ee
one obtains, using the  Stirling's approximation, that the summation
over the index $n$ is bounded by a constant and
\[(4N)!\left(\frac{C(E^*)\lambda^2} {\sqrt{E^*}}\right)^{N}\ \approx \
e^{-N}\,.\]
Hence, for such a value of $N$ we have
\begin{equation}\label{eq:pre}
\E\lvert R(E+i\epsilon;x,k)\rvert \ \le \  C\,\left(
e^{-\delta\,L/2}+ \frac{e^{-N}}{\epsilon \,\delta}\right)\,.
\end{equation}
Combining this bound with \eqref{eq:boxgr}, we obtain
\be\label{eq:estgrof} \E\lvert R_{\Lambda_{L,x}} (E+i\epsilon;x,w)
\rvert \ \le \ C\,\frac{L^2}{\epsilon}
\left[ e^{-\delta\,L/2}\,+\, \frac{e^{-N}}{\epsilon
\delta}\right]\,.\ee
Let $I=[E-\epsilon^{1/4},E+\epsilon^{1/4}]$, and let
\[G(I)\ :=\ \left\{\omega\in\Omega: \ \sigma(H_{\Lambda_{L,x}})\cap
I=\emptyset\right\}\,.\]
 For any $\omega\in G(I)$ we have by
the first resolvent identity
\[\left|R_{\Lambda_{L,x}} (E+i\epsilon;x,w)-R_{\Lambda_{L,x}}
(E;x,w)\right|\le \epsilon^{1/2}\,.\]
Pairing this bound with \eqref{eq:estgrof} and using Chebyshev's
inequality, we get that
\begin{multline}\label{eq:estgrof'}{\rm Prob} \left\{\omega\in G(I):\
\lvert R_{\Lambda_{L,x}} (E;x,w) \rvert \ge
C\frac{L^2}{\epsilon^{5/4}} \left[e^{-\delta\,L/2}\,+\,
\frac{e^{-N}}{\epsilon \delta}\right]\,+\,\epsilon^{1/4}\right\}\\
\le \ \epsilon^{1/4}\,.\end{multline}
The Wegner estimate \eqref{eq:wegn} implies that
\be \label{eq:wegner} {\rm Prob} \bigl\{
\sigma(H_{\Lambda_{L,x}})\cap I \neq\emptyset\bigr\} \ \leq \   C\,
|I|\, \,|\Lambda_{L,x}|^{\frac{\alpha+1}{\alpha}}\,(D_I)^{-1}\  = \ C
\,\epsilon^{1/4}\,(D_I)^{-1}\, L^{3\frac{\alpha+1}{\alpha}} \,.\ee
Combining \eqref{eq:estgrof'} and \eqref{eq:wegner} we arrive at
\begin{multline}\label{eq:estgrof''}{\rm Prob} \left\{\lvert R_{\Lambda_{L,x}}
(E;x,w) \rvert \ge C\frac{L^2}{\epsilon^{5/4}}
\left[e^{-\delta\,L/2}\,+\, \frac{e^{-N}}{\epsilon
\delta}\right]\,+\,\epsilon^{1/4}\right\}\\  \le \
C\epsilon^{1/4}\,(D_I)^{-1}\, L^{3\frac{\alpha+1}{\alpha}} \,.\end{multline}
We are now in position to set the values for the various parameters in the above formula, in terms of the single parameter $\lambda$. We first note that since $E\le E_0-\lambda^{4-\nu}$ with $0<\nu<1$ and $E_0$ defined in \eqref{eq:E_0} - \eqref{eq:Esigm}, it is allowed to choose $E^*=\lambda^{4-\nu}/2$. It then follows from \eqref{eq:refe} and \eqref{eq:choiceN} that $N\sim\lambda^{-B'\nu}$ with $0< B'<1/4$, for $\lambda$ small enough. Next, the parameter $\delta$ originating from Lemmas \ref{lemma:rep_matr} and \ref{lemma:rep} satisfies $\delta\sim\lambda^{2-\nu/2}$ for $\lambda$ small enough. We finally choose $L=\lambda^{-2}$ and $\epsilon=e^{-5\lambda^{-B'\nu/2}}$. Plugging it all into \eqref{eq:estgrof''} we obtain that for the small values of $\lambda$ the following initial volume estimate holds true:
\be\label{eq:estgrofa}{\rm Prob} \left\{\lvert R_{\Lambda_{\lambda^{-2},x}}
(E;x,w) \rvert \ge e^{-\lambda^{-B'\nu/2}}\right\}\  \le \
e^{-\lambda^{-B'\nu/2}}\,.\ee
The initial volume estimate \eqref{eq:estgrofa} together with the Wegner estimate \eqref{eq:wegner} provide the necessary input for MSA for small $\lambda$, and   the result follows from say Theorem 2.4 of \cite{GK}.

\end{proof}
\begin{proof}[Proof of Theorem \ref{lem:Wegner}]
\hspace{7cm} \vskip 3mm \noindent
In the sequel we  will use the fact that any $\alpha$-H\"older continuous non negative function $\rho$  admits a Lipschitz approximation by means of a   non negative function $\rho_K$ such that $\rho_K$ is $K$-Lipschitz supported on $J$, and 
\be\label{eq:Lapprox}
\|\rho-\rho_K\|_{\infty}\ \le \ C\,K^{\frac{-\alpha}{1-\alpha}}\,.\ee
It is then follows from $\|\rho\|_1=1$ that 
\be\label{eq:rhoK}
\|\rho_K\|_{1} \ \le \ 1\,+\,C\,\left(K^{\frac{-\alpha}{1-\alpha}}\right)\,|J|\,.\ee
Observe that for any random quantity $F_\omega$ that
depends on the random variables $\omega_i$, with $i\in\Lambda$, we
have for any $\delta$
\be \E F_\omega \ = \ \int F_\omega\prod_{i\in\Lambda}
\rho(\omega_i) d \omega_i \ = \ \frac{1}{\delta}\,
\int_{1}^{1+\delta} v^{|\Lambda|} dv \int
F_{v\hat\omega}\prod_{i\in\Lambda} \rho(v\hat \omega_i) d
\hat\omega_i\,.\ee
So, in order to evaluate $\E\ P_I(H_\omega^{\Lambda,\lambda})$,  
we can first integrate over the fictitious random variable $v$. We will rely on the following simple  statement.
\begin{lemma}\label{lem:aux_Wegner}
Let $A,B$ be  hermitian $n\times n$ matrices. Let $\{\lambda_k(A)\}_{k=1}^n$,  respectively  $\{\lambda_k(B)\}_{k=1}^n$ be the set of corresponding eigenvalues in the ascending order, and suppose that 
\[0<\alpha=\lambda_1(B)\le \lambda_n(B)=\beta\,.\]
Let $I,J$ be the intervals $[a,b]$ and $[c,d]\subset \R_+$ accordingly  and let $P_I$ denote the characteristic function of the interval $I$. Then we have 
\begin{equation}\label{eq:aux_Weg}
\int_J \tr P_I(A+x B) dx \ \le \ \alpha^{-1}\, |I| \,\tr P_{\hat I}(A)\,,
\end{equation}
where $\hat I=[a-\beta d,b-\alpha c]$.
\end{lemma}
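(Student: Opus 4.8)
The plan is to follow the eigenvalues of the linear pencil $A+xB$ as $x$ runs over $J=[c,d]$, exploiting that positive-definiteness of $B$ forces them to move monotonically and with a uniformly positive speed. Write $\lambda_k(x):=\lambda_k(A+xB)$ for the eigenvalues in ascending order, so that the counting identity $\tr P_I(A+xB)=\sum_{k=1}^{n}\1\{\lambda_k(x)\in I\}$ holds for each $x$. Everything rests on two consequences of the operator inequality $\alpha\,\1\le B\le\beta\,\1$. First, for $x\ge 0$ the min--max principle applied to $A+x\alpha\,\1\le A+xB\le A+x\beta\,\1$ yields the two-sided bound $\lambda_k(A)+\alpha x\le\lambda_k(x)\le\lambda_k(A)+\beta x$. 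Second, writing $A+x_2B=(A+x_1B)+(x_2-x_1)B\ge(A+x_1B)+(x_2-x_1)\alpha\,\1$ for $x_1<x_2$ and using min--max once more gives the uniform lower speed $\lambda_k(x_2)-\lambda_k(x_1)\ge\alpha(x_2-x_1)$; in particular each $\lambda_k(\cdot)$ is continuous and strictly increasing on $J$.

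First I would identify the branches that can enter $I$ at all. If $\lambda_k(x)\in[a,b]$ for some $x\in[c,d]$, then the lower bound gives $b\ge\lambda_k(x)\ge\lambda_k(A)+\alpha c$, hence $\lambda_k(A)\le b-\alpha c$, while the upper bound gives $a\le\lambda_k(x)\le\lambda_k(A)+\beta d$, hence $\lambda_k(A)\ge a-\beta d$. Thus only branches with $\lambda_k(A)\in[a-\beta d,\,b-\alpha c]=\hat I$ contribute, and their count is exactly $\tr P_{\hat I}(A)$. Next I would bound, for each such contributing branch, the sojourn measure $|S_k|$ with $S_k:=\{x\in J:\lambda_k(x)\in I\}$. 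Monotonicity makes $S_k$ a closed subinterval $[x_1,x_2]$ with $\lambda_k(x_1),\lambda_k(x_2)\in[a,b]$, and the speed bound gives $|I|=b-a\ge\lambda_k(x_2)-\lambda_k(x_1)\ge\alpha(x_2-x_1)$, so $|S_k|\le|I|/\alpha$. Summing with Tonelli, $\int_J\tr P_I(A+xB)\,dx=\sum_{k=1}^n|S_k|$, in which only the $\tr P_{\hat I}(A)$ nonzero terms survive and each is at most $\alpha^{-1}|I|$, which is precisely \eqref{eq:aux_Weg}.

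The only real obstacle is technical bookkeeping about the eigenvalue trajectories rather than any genuine analysis. I must ensure the ascending-order maps $\lambda_k(\cdot)$ are truly continuous --- they are, being the ordered roots of a characteristic polynomial with coefficients polynomial in $x$; alternatively one may pass to the real-analytic Rellich branches, whose (possibly crossing) labelling is irrelevant to the counting function. I also need $x\mapsto\tr P_I(A+xB)$ to be Borel measurable so the Tonelli interchange of finite sum and integral is legitimate, which is immediate since it is a finite sum of indicators of the sets $\{x:\lambda_k(x)\in I\}$. Finally, the use of closed endpoints for $I$ and $\hat I$ is harmless, as the boundary contributes only null sets to the integral. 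No input beyond the two monotonicity inequalities above is required.
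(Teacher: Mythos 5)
Your proof is correct and follows essentially the same route as the paper: the two inequalities you derive from min--max are exactly the two applications of Weyl's theorem the authors use, first to confine the contributing eigenvalue branches to those with $\lambda_k(A)\in\hat I$, and second to bound the sojourn time of each branch in $I$ by $\alpha^{-1}|I|$ before summing over $k$. The extra care you take with continuity, monotonicity, and measurability is harmless bookkeeping that the paper leaves implicit.
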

For all $v\in[1,1+\delta]$   we can use \eqref{eq:Lapprox} to bound
\begin{eqnarray}
\rho(v\hat \omega_i)&\le &\rho_K(v\hat
\omega_i) \,+\, C\,K^{\frac{-\alpha}{1-\alpha}}\1_{J}(v\hat \omega_i)\nonumber\\ &\le & \rho_K(\hat
\omega_i) \,+\,\left(C\,K^{\frac{-\alpha}{1-\alpha}}\,+\,K\delta\,|J|\right)\,\1_{J'}(\hat \omega_i)
\label{eq:rhobound}
\\ &=:&f(\hat \omega_i)\,, \nonumber
\end{eqnarray}
where $J'=(1+\delta)J$. 

If $E$ is a middle point of $I$, we can write $P_I(H_{v\hat \omega}^{\Lambda,\lambda})=P_{I_0}(H_{v\hat \omega}^{\Lambda,\lambda}-E)$, where $I_0$ is centered at origin and has the same width as $I$. Let $B:=-\frac{1}{2}\Delta-E$, then $H_{v\hat \omega}^{\Lambda,\lambda}-E=B+v \lambda V_\omega^\Lambda$. We note that $B$ satisfies the same properties as its counterpart in Lemma \ref{lem:aux_Wegner} above, with $\alpha=D_I$ (from Lemma \ref{lem:Wegner}) and $\beta=6+D_I$. Since 
\[P_{I_0}(B+v \lambda V_\omega^\Lambda) \ = \ P_{v^{-1} I_0}(v^{-1}B+ \lambda V_\omega^\Lambda) \ \le \ P_{ I_0}(v^{-1}B+ \lambda V_\omega^\Lambda)\] 
for $v\in[1,1+\delta]$, we can estimate
\begin{eqnarray}\label{eq:C_l}
\E\ P_I(H_\omega^{\Lambda,\lambda}) & \le & \frac{1}{\delta}\,
\int
\prod_{i\in\Lambda} f(\hat \omega_i) d\hat\omega_i\ \int_{1}^{1+\delta} v^{|\Lambda|} P_{ I_0}(v^{-1}B+ \lambda V_\omega^\Lambda)dv \\ \nonumber 
 & \le & \frac{1}{\delta}\,
\left(1+\delta\right)^{|\Lambda|}\  \int
\prod_{i\in\Lambda} f(\hat \omega_i) d\hat\omega_i \ \int_{1}^{1+\delta}  P_{ I_0}(v^{-1}B+ \lambda V_\omega^\Lambda)dv \\ \nonumber 
& = & \frac{1}{\delta}\,\left(1+\delta\right)^{|\Lambda|}\  \int
\prod_{i\in\Lambda} f(\hat \omega_i) d\hat\omega_i \ \int_{1}^{1+\delta}   P_{ I_0}(xB+ \lambda V_\omega^\Lambda)x^{-2}dx\\ \nonumber 
& \le & \frac{1}{\delta}\,\left(1+\delta\right)^{2+|\Lambda|}\  \int
\prod_{i\in\Lambda} f(\hat \omega_i) d\hat\omega_i \ \int_{J_\delta} P_{ I_0}(xB+ \lambda V_\omega^\Lambda)dx\,,
\end{eqnarray}
where 
\[ J_\delta \ = \ \left[\left(1+\delta\right)^{-1},1\right]\,.\]
Applying Lemma \ref{lem:aux_Wegner} with the choice $A=\lambda V_\omega^\Lambda$, we get 
\be
\E\ P_I(H_\omega^{\Lambda,\lambda})  \ \le \ \frac{1}{\delta}\,\left(1+\delta\right)^{2+|\Lambda|}\,(D_I)^{-1}\, |I|\   \int
\prod_{i\in\Lambda} f(\hat \omega_i) d\hat\omega_i \tr P_{I_\delta}(\lambda V_\omega^\Lambda)\,,
\ee
where 
\be
I_\delta \ = \ \left[-\frac{|I|}{2}-(6+D_I),\frac{|I|}{2}-D_I\left(1+\delta\right)^{-1}\right]\,.
\ee
Since $\tr P_{I_\delta}(\lambda V_\omega^\Lambda) \le |\Lambda|$, we can use \eqref{eq:Lapprox} and \eqref{eq:rhoK} to  bound 
\begin{eqnarray*}
\E\ P_I(H_\omega^{\Lambda,\lambda})& \le & \frac{1}{\delta}\,\left(1+\delta\right)^{2+|\Lambda|}\,(D_I)^{-1}\, |I|\, |\Lambda|\   \int
\prod_{i\in\Lambda} f(\hat \omega_i) d\hat\omega_i \\
&&\hspace{-2cm} \le \   \frac{1}{\delta}\,\left(1+\delta\right)^{2+2|\Lambda|}\,\left(1+\left(2\,C\,K^{\frac{-\alpha}{1-\alpha}}\,+\,K\delta\,|J|\right)|J|\right)^{|\Lambda|}\,(D_I)^{-1}\, |I|\, |\Lambda| \\
& \le &  \frac{\, |I|\, |\Lambda|\exp\left\{(1+2K|J|^2)\,\left({2+|\Lambda|} \right)\,\delta\,+\,2\,C\,K^{\frac{-\alpha}{1-\alpha}}\,|J|\,|\Lambda|\right\}}{\delta\, D_I}\,,
\end{eqnarray*}
 where in the second step we have used \eqref{eq:rhobound}. Choosing 
\[K \ = \ \left(|J|\,|\Lambda|\right)^{\frac{1-\alpha}{\alpha}}\,;\quad \delta=\frac{1}{\left(1+|\Lambda|^{\frac{1-\alpha}{\alpha}}|J|^{\frac{1+\alpha}{\alpha}}\right)\,|\Lambda|}\,,\]
we get the desired bound \eqref{eq:wegn}.

\end{proof}

\begin{proof}[Proof of Lemma \ref{lem:aux_Wegner}]
\hspace{7cm} \vskip 3mm \noindent
The result follows from two consecutive applications of  Weyl's theorem, which states that for any pair of $n\times n$ hermitian matrices $C$ and $D$ we have 
\be\label{eq:Weyl}
\lambda_k(C)+\lambda_1(D) \ \le \ \lambda_k(C+D) \ \le \ \lambda_k(C)+\lambda_n(D)\,.
\ee
First we apply Weyl's theorem with $C=A$ and $D=x B$ to conclude that 
\be\label{eq:1we}\{k:\ \lambda_k(A+x B)\in I \mbox{ for some x }\in J\} \ \subset \ \{k:\ \lambda_k(A)\in \hat I\}\,.\ee  
Suppose now that $\lambda_k(A+x_0 B)\in I$ for some value $x_0$. Then using Weyl's theorem with $C=A+x_0 B$ and $D=(x-x_0) B$ we obtain that 
\be\label{eq:2we}
\lambda_k(A+x B)\notin I \mbox{ for } |x-x_0|>\alpha^{-1}\,|I|\,.
\ee
Combining  \eqref{eq:1we} and \eqref{eq:2we} 
we obtain the desired bound \eqref{eq:aux_Weg}.

\end{proof}

\section{Renormalization of tadpole contributions}\label{sec:renorm}
The aim of this section is to set up the appropriate resolvent expansion that will be used in the proofs of Lemmas
\ref{lemma:rep_matr} and \ref{lemma:rep}, namely to obtain  decompositions \eqref{eq:mdec_matr}
and \eqref{eq:mdec}. In particular, based on some combinatorial observation,
equation (\ref{eq:tadp_canc}), renormalization of tadpoles is done in both real 
(case $\mathcal N$) and momentum (case $\mathcal O$) space. 
The estimates that control various terms in the  resulting decompositions are established  in Section \ref{sec:repr}.
\vskip 3mm
\noindent We  decompose $H_\omega^\lambda$ as
\[H_\omega^\lambda=H_r+\tilde V\,,\quad
H_r:=-\frac{1}{2}\Delta-\sigma(p,E+i\epsilon)\,,\quad \tilde V:=\lambda V_\omega
+ \sigma(p,E+i\epsilon)\,,\]
where $\sigma(p,E+i\epsilon)$ is a solution of  \eqref{eq:self} for ($\mathcal O$) case.

Respectively, for  ($\mathcal N$) case we decompose
\[H_\omega^\lambda=H_r+\tilde V\,,\quad
H_r:=-\frac{1}{2}\Delta-\Sigma\,,\quad \tilde V:=\lambda V_\omega
+ \Sigma\,,\]
where $\Sigma$   is a periodic extension of sigma defined in \eqref{eq:self'}. Let
\be\label{eq:Rsigma}
R_r:=(H_r-E-i\epsilon)^{-1}\,.\ee
We can expand $R$ (defined in \eqref{eq:R}) into (truncated) resolvent series
\begin{equation}\label{eq:expa_matr}
R=\sum_{i=0}^N(-R_r\tilde V)^iR_r\ + \
(-R_r\tilde V)^{N+1}R\,.\end{equation}
To handle the renormalization of tadpole contributions properly, we
decide at which value of $n$ to halt the expansion in
\eqref{eq:expa_matr} individually for each contribution according to the
following rule (to which we will refer as a stopping rule):
If we open the brackets in \eqref{eq:expa_matr}, we obtain terms of the form
\[
R_r\theta R_r\theta\ldots R_r
\theta R_r
\]
where $\theta$ is either $-\lambda V_\omega$, or $-\sigma(p,E+i\epsilon)$ / $-\Sigma$
(whenever $\theta$ takes the later value we will refer to it as a {\em
bullet}).
Since $\sigma(p,E+i\epsilon)=O(\lambda^2)$, $\Sigma=O(\lambda^2)$ for all permissible values
of $E$, see Appendix \ref{sec:appendI}, one can unambiguously define
the {\em order} $l$ (in powers of $\lambda$) of the particular
contribution
\[R_r\theta R_r\theta\ldots R_r
\theta R_\sharp\,,\] (with $R_\sharp$ being either
$R_r$ or $R$) according to the following rule: Each factor of
$\sigma$ counts as $2$, while appearance of the random potential
counts as $1$, and we add up all the exponents to get the order of the
term. For instance, the order of the expression
\[R_r\sigma R_r\lambda V_\omega R_r
\sigma R\] is $5$.
To illustrate this procedure we
write down the expansion obtained in a case of $N=2$:
\begin{multline*}R\ = \ R_r \ - \
R_r\sigma R \
- \ \left\{\lambda R_r  V_\omega R \right\} \ = \\
 R_r \ - \ R_r\sigma R\ - \ \lambda
R_rV_\omega R_r \\ + \ \lambda R_r V_\omega R_r \sigma R \
  + \ \lambda^2 R_r
V_\omega R_r V_\omega R  \,,
\end{multline*}
 where the term in the curled brackets is the one
we expanded according to the stopping rule. Note that the
penultimate term is of order $3$. It is not difficult to see (see Lemma 3.1 in \cite{elgart} for the proof) that for a general $N$ we get
\begin{lemma}\label{lem:de}
For any integer $N$ we have a decomposition  
\be\label{eq:gra}
R\ = \ \sum_{l=0}^{N-1}
A'_lR_r+A'_NR+B_NR
\  = \ \sum_{l=0}^{N-1}A_l \ + \ \tilde
A_NR\,,
\ee
where $A'_0=I$, $A'_l$ is a summation over
all possible terms of the type
\begin{equation}\label{eq:ord}
R_r
\theta R_r\theta\ldots R_r \theta
\end{equation}
which are of the order $l>0$, while
\be\label{eq:ord1}
B_N\ = \ -\,
A_{N-1}\sigma\,.
\ee
The quantities $A_l$ and $\tilde A_N$ are defined as
\[A_l\ = \ A'_l R_r\,,\quad \tilde
A_N\ = \ A'_N+B_N\,.\]
\end{lemma}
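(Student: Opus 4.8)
The plan is to establish \eqref{eq:gra} by induction on $N$, advancing the stopping rule one level at a time through the elementary resolvent identity $R = R_r - R_r\tilde V R$. Splitting $\tilde V = \lambda V_\omega + \sigma$ (with $\sigma$ read as $\Sigma$ in the ($\mathcal N$) case) into its two summands, this identity becomes
\be
R \ = \ R_r \ + \ R_r\,\theta_1\,R \ + \ R_r\,\theta_2\,R\,,\qquad \theta_1 := -\lambda V_\omega\,, \quad \theta_2 := -\sigma\,,
\ee
where $\theta_1$ carries order $1$ and the bullet $\theta_2$ carries order $2$. Throughout I call a \emph{prefix} any product $R_r\theta R_r\theta\cdots R_r\theta$ that ends in a letter $\theta\in\{\theta_1,\theta_2\}$, and I grade it by the sum of the orders of its letters; then $A'_l$ is the sum of all prefixes of order $l$, and $A'_l R_r$, $A'_l R$ are the corresponding terms ending in $R_r$, $R$. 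Since replacing a trailing $R$ by $R_r$ does not change the order, this grading is well defined on all terms produced by iterating the displayed identity, and finalizing a term moves it unchanged in order into the $A'_l R_r$ part.

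First I would check the base case $N=1$: the displayed identity is exactly $R = R_r - \lambda R_r V_\omega R - R_r\sigma R = A'_0 R_r + A'_1 R + B_1 R$ with $A'_0 = I$, $A'_1 = R_r\theta_1$ and $B_1 = R_r\theta_2 = -A_0\sigma$. For the inductive step I would assume \eqref{eq:gra} for $N$ and note that on its right-hand side the only $R$-terminated term of order below $N+1$ is $A'_N R$ (the term $B_N R$ already has order $N+1$), so the stopping rule prescribes expanding only this term:
\be
A'_N R \ = \ A'_N R_r \ + \ A'_N R_r\,\theta_1\,R \ + \ A'_N R_r\,\theta_2\,R\,.
\ee
The first summand is the new finalized level $A_N = A'_N R_r$; the last summand is the order-$(N+2)$ overshoot $A'_N R_r\theta_2 R = -A_N\sigma R = B_{N+1}R$; the middle summand, of order $N+1$, is to be merged with the surviving $B_N R$.

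The crux is the purely combinatorial identity $A'_{N+1} = A'_N R_r\,\theta_1 + B_N$. I would verify it by sorting the order-$(N+1)$ prefixes according to their last letter: a prefix ending in $\theta_1$ is an order-$N$ prefix followed by $R_r\theta_1$, and these sum to $A'_N R_r\theta_1$; a prefix ending in $\theta_2$ is an order-$(N-1)$ prefix followed by $R_r\theta_2$, and these sum to $A'_{N-1}R_r\theta_2 = -A_{N-1}\sigma = B_N$. Consequently $A'_N R_r\theta_1 R + B_N R = A'_{N+1} R$, and substituting back gives
\be
R \ = \ \sum_{l=0}^{N} A'_l R_r \ + \ A'_{N+1} R \ + \ B_{N+1} R\,, \qquad B_{N+1} = -A_N\sigma\,,
\ee
which is \eqref{eq:gra} at level $N+1$. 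The second equality in \eqref{eq:gra} then follows at once by grouping $A_l = A'_l R_r$ and $\tilde A_N = A'_N + B_N$.

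I expect the single delicate point to be precisely this last bookkeeping: recognizing that the overshoot term $B_N$ generated at one level is exactly the $\theta_2$-terminating half of the order-$(N+1)$ prefixes, so that it is reabsorbed into $A'_{N+1}$ rather than accumulating as a separate residue. The reason no order-$(N+1)$ prefix ending in $\theta_1$ appears among the stopped remainders is that removing such a letter would leave an order-$N$ prefix, which the stopping rule had already finalized. Everything else is a routine iteration of the resolvent identity, with termination guaranteed by the fact that each expansion raises the order by at least one.
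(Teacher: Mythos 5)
Your proof is correct. The paper does not actually prove Lemma \ref{lem:de} itself but defers to Lemma 3.1 of \cite{elgart}; your induction formalizes precisely the iterated resolvent identity with the stopping rule that the paper describes and illustrates for $N=2$, and the one genuinely delicate point --- the recurrence $A'_{N+1}=A'_N R_r\theta_1+A'_{N-1}R_r\theta_2=A'_N R_r\theta_1+B_N$, which reabsorbs the overshoot $B_N$ into the next level rather than letting it accumulate --- is identified and verified correctly.
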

In order to explain the renormalization, we borrow the following
paragraph from Section 3.1. of \cite{elgart} for notations.

For an integer $N$, let $\Upsilon_{N}$ be a set $\{1,\ldots,N,N+2,\ldots,2N+1\}$. Let $\Pi=\Pi_{N}$ be a set of partitions of $\Upsilon_{N}$
into disjoint subsets $S_j$ of cardinality $|S_j|\in 2\N$. Two
partitions $\pi = \{S_j\}_{j=1}^m$, $\pi' = \{S'_j\}_{j=1}^m$  are
equivalent, $\pi=\pi'$, if they coincide up to permutation. For
$S\subset \Upsilon_{N}$, let
\begin{equation}
\delta(x_{S})=\sum_{y\in\Z^3}\prod_{j\in S}\delta_{|x_j-y|}\,,
\end{equation}
where $\delta_{x}$, $x\in \Z$ is Kronecker delta function, and $x_S$
denotes the collection of $\{x_i\,,\ i\in S\}$.
 One has an identity (see e.g. \cite{Chen} Section 3.1 for details)
\begin{equation}\label{eq:tree_graph}
\E \left[\prod_{j\in \Upsilon_{N,N}}
\omega_{x_j}\right]=\sum_{m=1}^N\sum_{\pi=\{S_j\}_{j=1}^m}\,
\prod_{j=1}^m c_{|S_j|}\delta(x_{S_j})\,,
\end{equation}
where $c_{2l}\le (cl)^{2l+1}$ and
$c_2=\E\, \omega_x^2=1$, provided assumption ($\mathcal A$) holds. The set $S_j$ in the partitions
$\pi\in \Pi$ can be of the special type: If
\begin{equation}\label{eq:gatedef}
 S_j=\{i,i+1\}
\end{equation}
we will refer to it as a {\em tadpole}, or a {\em gate} set. 

Now let $\pi_k^c$ denote a collection of disjoint sets $\{S_j\}$ such that
any $S_j\in\pi_k^c$ is a tadpole, and the cardinality of $\pi_k^c$
is $k$. Then any partition $\pi$ can be decomposed as
$\pi=\pi_k^c\cup\{S\}$ for some $0\le k \le N$, where $S$ satisfies
$(\cup_{s_j\in\pi_k^c}S_j)\cup S=\Pi_N$. Note that we didn't require
$S$ to be a tadpole free set. We will denote by $\pi_0$ a partition
of $\Upsilon_N$ such that no $S_j\in\pi_0$ is a tadpole. Lemmas
\ref{lem:renorm_matr} and \ref{lem:renorm}  below hinge on the
following observation:
 \begin{multline}\label{eq:tadp_canc}
 \sum_{k=0}^N(-1)^k\, \sum_{\substack{\pi\in \Pi:\\ \pi=\pi_k^c\cup \{S\}}} \ \E  \left[\prod_{i\in S}
\omega_{x_i}\right]\ \prod_{S_l\in\pi_k^c}\delta(x_{S_l}) \ =\ \sum_{\substack{\pi\in \Pi:\\ \pi=\pi_0}}\
\prod_{S_j\in \pi} c_{|S_j|}\delta(x_{S_j})\,.
 \end{multline}
Note that the summation on the right hand side runs over the tadpole-free partitions. To verify \eqref{eq:tadp_canc} one just need to make a straightforward check that all tadpole contributions on the left hand side cancel out exactly.

To see why the renormalization of tadpole's contribution in conjuncture with the above stopping  procedure is useful in both ($\mathcal O$) and ($\mathcal N$) cases, we consider two different tracks for each one of them:
\subsection{Case ($\mathcal N$)}

Let $P_x$ denote the projection onto the set $\hat \Theta-x$ for $x\in k\Z^3$, where $\hat \Theta,\ k$ are introduced in Assumption ($\mathcal N$). Then we can partition the identity operator as
\[I \ = \ \sum_{x\in k\Z^3} P_x\,,\]
and $P_xP_y=0$ for $x\neq y$.  Instead of estimating the matrix elements of the corresponding terms in expansion \eqref{eq:gra} directly, we will consider norms of operators $P_x R P_y$ for $x,y\in k\Z^3$. Clearly, if $x'\in Range\, P_x$ and   $y'\in Range\, P_y$, then
\[|R(x',y')| \ \le \ \|P_x R P_y\|\,.\]
To evaluate $P_x R P_y$ we use Lemma \ref{lem:de}. We insert the partitions of identity between each factor of the resolvent in \eqref{eq:ord}, with the net result
\begin{multline}\label{eq:Alord} P_{x_0}A_l P_{x_{n+1}} \ = \\
\sum_{\substack{\theta,\,x_j\in\Z^3;\\ j=1,\ldots,n}}P_{x_0} R_r P_{x_1}\theta(x_1)P_{x_1} R_r P_{x_2}\theta(x_2)...P_{x_{n-1}} R_r P_{x_n}
\theta(x_n)P_{x_{n}} R_r P_{x_{n+1}}
\end{multline}
where $\theta(x)$ is either $-\lambda\omega_{x}
\mathcal D$, or $-\Sigma$ (defined in Eqs. \ref{eq:mathcalD} and \ref{eq:Sigm}, respectively). The index $n$ here depends on the particular contribution in $A'_l $, but the order of all contributions is $l$.

To estimate the typical size of $P_x A_l P_y$ we consider the matrix
\be\label{eq:matrAl}
{\mathcal A}_{x,y} \ := \ \E\left\{P_x A_l P_y\cdot P_y A^*_l P_x\right\}\,.
\ee
The key technical lemmas are the following assertions:
\begin{lemma}\label{lem:renorm_matr}
We have
\begin{multline}\label{eq:gaf1_matr}
{\mathcal A}_{x,y}\ = \ \lambda^{2l}\, \sum_{\substack{\pi\in \Pi_l:\\ \pi=\pi_0}}\
\sum_{{\substack{x_j\in k\Z^3: \\ j\in\Upsilon_{l}}}}\ \prod_{S_j\in \pi}
c_{|S_j|}\delta(x_{S_j}) \\ \times \  \prod_{i=0}^{l-1}\left\{P_{x_i}R_rP_{x_{i+1}}\,{\mathcal D}\right\}\,R_rP_yR^*_r\,\prod_{i=l+2}^{2l+1}\left\{{\mathcal D}\,P_{x_i}R^*_rP_{x_{i+1}}\right\}
 \,,
\end{multline}
where  we  are using  convention
$x_0=x$; $x_{2l+2}=y$.
\end{lemma}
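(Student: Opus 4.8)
The plan is to expand the operator $\mathcal{A}_{x,y} = \E\{P_x A_l P_y A_l^* P_x\}$ according to the order-$l$ decomposition of $A_l$, to recognise each self-energy bullet as an (oppositely signed) tadpole by means of the self-consistent equation \eqref{eq:self'}, and then to invoke the combinatorial identity \eqref{eq:tadp_canc} to cancel all tadpoles in favour of the tadpole-free partition sum \eqref{eq:gaf1_matr}.

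First I would insert the resolution of the identity $I = \sum_{x\in k\Z^3} P_x$ between consecutive factors of every order-$l$ term of $A_l$ and of its adjoint $A_l^*$, exactly as in \eqref{eq:Alord}. Since $\mathcal{D}$ and $\Sigma$ are block-diagonal with respect to the cells $\{\hat\Theta+x\}_{x\in k\Z^3}$, the projections commute with $\mathcal{D}$ and localise each random factor $-\lambda\omega_{x_i}\mathcal{D}$ and each bullet $-\Sigma$ to a single cell $x_i\in k\Z^3$. Writing $A_l = A'_l R_r$ and $A_l^* = R_r^*(A'_l)^*$, the trailing resolvent of $A_l$ and the leading resolvent of $A_l^*$ bracket the central projection to produce the factor $R_r P_y R_r^*$; this is also why the junction index $l+1$ is absent from $\Upsilon_l$, so that $A_l$ supplies the slots $1,\ldots,l$ and $A_l^*$ supplies the slots $l+2,\ldots,2l+1$.

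The decisive step is the identification of bullets with tadpoles. By \eqref{eq:self'}, within a single cell $P_x\Sigma P_x = \sigma = \lambda^2\,\mathcal{D}\,(P_x R_r P_x)\,\mathcal{D}$, which is precisely the operator factor produced when two adjacent random potentials $\omega_{x_i},\omega_{x_{i+1}}$ are contracted: the pairing $\E\,\omega_{x_i}\omega_{x_{i+1}} = c_2\,\delta_{x_i,x_{i+1}}$ forces both $\mathcal{D}$'s into a common cell and sandwiches the restricted resolvent $P_{x_i} R_r P_{x_i}$. Hence, once the prefactor $\lambda^{2l}$ is stripped off and all $2l$ slots of $\Upsilon_l$ are treated as bare random potentials each carrying the uniform factor $P_{x_i} R_r P_{x_{i+1}}\mathcal{D}$, a bullet sitting at $\{i,i+1\}$ inserts the same operator as a gate set $S_j=\{i,i+1\}$, but with the opposite sign, because $-\Sigma$ carries a single minus whereas the two random factors $(-\lambda)(-\lambda)$ are positive. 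Counting a factor $-\lambda$ for each of the $2l$ slots gives $(-\lambda)^{2l}=\lambda^{2l}$, and each bullet contributes one further minus sign, so the total explicit sign is $(-1)^{k}\lambda^{2l}$ with $k$ the number of bullets; since a bullet lives inside a single string it can never straddle the junction $l+1$, in agreement with the adjacency in $\Upsilon_l$.

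With this identification the expectation factorises into the fixed operator chain
\[
\prod_{i=0}^{l-1}\{P_{x_i} R_r P_{x_{i+1}}\mathcal{D}\}\;R_r P_y R_r^*\;\prod_{i=l+2}^{2l+1}\{\mathcal{D}\,P_{x_i} R_r^* P_{x_{i+1}}\}
\]
times the scalar weight obtained by summing over all bullet placements (the explicit tadpoles $\pi_k^c$ weighted by $(-1)^{k}$) and expanding the surviving random factors through the moment formula \eqref{eq:tree_graph}. This scalar weight is exactly the left-hand side of \eqref{eq:tadp_canc} with $N=l$ on the index set $\Upsilon_l$; applying that identity collapses it to $\sum_{\pi=\pi_0}\prod_{S_j}c_{|S_j|}\delta(x_{S_j})$, which is \eqref{eq:gaf1_matr}. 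The main obstacle is the bookkeeping of this factorisation: one must check that the bullet and the tadpole insert identical operator factors slot-by-slot along the chain, so that the scalar identity \eqref{eq:tadp_canc} may be pulled out of the fixed operator product, and that the signs and powers of $\lambda$ arising from the two conjugate strings combine consistently to the single prefactor $\lambda^{2l}$. Once the bullet-equals-tadpole identification is secured at the operator level, the cancellation of tadpoles is purely combinatorial and is furnished verbatim by \eqref{eq:tadp_canc}.
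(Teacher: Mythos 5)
Your proposal is correct and follows essentially the same route as the paper's own proof: the identity $P_x\Sigma = \lambda^2\,\mathcal{D}\,P_x R_r P_x\,\mathcal{D}$ from \eqref{eq:self'} turns each self-energy bullet into a signed tadpole insertion, yielding the representation of $P_xA_lP_y\cdot P_yA_l^*P_x$ as $\lambda^{2l}\sum_k(-1)^k(\cdots)$, after which taking expectations and applying \eqref{eq:tadp_canc} gives \eqref{eq:gaf1_matr}. Your sign and $\lambda$-power bookkeeping and the observation about the junction index $l+1$ match the paper's (much terser) argument.
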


\begin{proof}
\hspace{7cm} \vskip 3mm \noindent
 We first observe that by definition of $ {\mathcal D}$, $\Sigma$ and \eqref{eq:self'} we have
 \[\lambda^2 {\mathcal D}P_x R_r P_x {\mathcal D} \ = \ P_x\Sigma\,.\]
Using this identity in the definition of $A_l$, we can represent
\begin{multline}\label{eq:reprAno}
P_x A_l P_y\cdot P_y A^*_l P_x \ = \    \lambda^{2l}\, \sum_{k=0}^N(-1)^k\
\sum_{{\substack{x_j\in k\Z^3: \\ j\in\Upsilon_{l}}}}\
\sum_{\substack{\pi\in \Pi:\\ \pi=\pi_k^c\cup \{S\}}} \   \left[\prod_{i\in S}
\omega_{x_i}\right]\ \prod_{S_l\in\pi_k^c}\delta(x_{S_l})
\\ \times \  \prod_{i=0}^{l-1}\left\{P_{x_i}R_rP_{x_{i+1}}\,{\mathcal D}\right\}\,R_rP_yR^*_r\,\prod_{i=l+2}^{2l+1}\left\{{\mathcal D}\,P_{x_i}R^*_rP_{x_{i+1}}\right\}\,.\end{multline}
Computing the expected value of the left and right hand sides with respect to randomness and using \eqref{eq:tadp_canc}, we obtain  \eqref{eq:gaf1_matr}.

\end{proof}

\subsection{Case ($\mathcal O$)}
The counterpart of Lemma \ref{lem:renorm_matr} in this case is the following  generalization of Lemma 3.2 of \cite{elgart} (applicable for the non correlated randomness, i.e. $\hat u(p)=1$). In what follows, we will use the short hand notation $E(p)$ in place of $e(p)-E-i\epsilon-\sigma(p,E+i\epsilon)$, and $E^*(p)$ for the hermitian conjugate of the multiplication operator $E(p)$. The renormalized propagator $R_r$ in this case will be given by its kernel 
\begin{equation}\label{eq:R_r_O}R_r(z,w)\ = \ \int_{\T^3}e^{i2\pi(z-w)p}\frac{d^3p}{E(p)}\,.\end{equation}

The following assertion holds:
\begin{lemma}\label{lem:renorm}
For $A_l$ defined in Lemma \ref{lem:de}, the function $\E\, |A_l(x,y)|^2$ is a function of the variable $x-y$. Let
\be\label{eq:A_Lo} {\mathcal A}_{l,E}(x-y) \ :=\ \E\, |A_l(x,y)|^2\,,\ee
then we have
\begin{multline}\label{eq:gaf}
{\mathcal A}_{l,E}(x-y)\ = \ \lambda^{2l}\,\int_{(\T^3)^{2l+2}}  \, 
e^{i\alpha}\ \frac{dp_{l+1}}{E(p_{l+1})}\,\frac{dp_{2l+2}}{E(p_{2l+2})}\
\prod_{j=1}^{l} \, \frac{dp_j}{E(p_j)}\  \prod_{j=l+2}^{2l+1} \,\frac{dp_j}{E^*(p_j)}\\ \times \  \prod_{i\in \Upsilon_l}
\hat u(p_j-p_{j+1})\
\sum_{\substack{\pi\in \Pi_l:\\ \pi=\pi_0}} \  \prod_{S_k\in\pi}
c_{|S_k|}\,\delta\left(\sum_{i\in S_k}p_i-p_{i+1}\right) \,,
\end{multline}
where
\[ \alpha\ :=\ -i2\pi(p_1+p_{l+2})\cdot(x-y)\,.\]
\end{lemma}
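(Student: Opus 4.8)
The plan is to mirror the structure of the proof of Lemma~\ref{lem:renorm_matr}, but now working in momentum space rather than with the real-space projections $P_x$. First I would start from the decomposition of $A_l = A'_l R_r$ supplied by Lemma~\ref{lem:de}, where $A'_l$ is a sum over all terms of the form \eqref{eq:ord} of order $l$. Writing out a single such term as a chain $R_r \theta R_r \theta \cdots R_r \theta R_r$ and passing to the Fourier representation \eqref{eq:R_r_O} for each renormalized propagator, each factor $R_r$ contributes $1/E(p_j)$ and each vertex $\theta$ is either $-\lambda\,\omega_{x}\,u(\,\cdot\,-x)$ (which in momentum space produces a factor $\hat u(p_j - p_{j+1})$ together with a random weight $\omega$) or a $-\sigma$ bullet. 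The key algebraic identity to invoke here is the momentum-space analogue of $\lambda^2 \mathcal{D} P_x R_r P_x \mathcal{D} = P_x \Sigma$ used in the $(\mathcal N)$ case; namely, that the self-energy insertion $\sigma(p,E+i\epsilon)$ is exactly $\lambda^2 \int \lvert\hat u(p-q)\rvert^2/E(q)\,d^3q$ by the defining equation \eqref{eq:self}. This lets me rewrite every bullet factor as a contracted pair of $u$-vertices, so that $\sigma$ insertions become precisely the tadpole (gate) partitions of the index set $\Upsilon_l$.

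Next I would form the product $A_l(x,y)\,\overline{A_l(x,y)}$ and take the expectation over the randomness. The conjugated copy $\overline{A_l}$ contributes the second half of the momentum variables, indexed by $\{l+2,\dots,2l+1\}$ with propagators $1/E^*(p_j)$, which is where the ordering of $\Upsilon_l = \{1,\dots,l,\,l+2,\dots,2l+1\}$ comes from. Upon expanding, the expectation $\E\bigl[\prod \omega_{x_i}\bigr]$ over the random weights attached to the genuine ($-\lambda V_\omega$) vertices factorizes via the moment-cumulant identity \eqref{eq:tree_graph} into a sum over pairings (partitions) $\pi$ with cumulant coefficients $c_{\lvert S_j\rvert}$. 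Writing $A_l(x,y)\overline{A_l(x,y)}$ as in \eqref{eq:reprAno} but in momentum variables, the Kronecker delta $\delta(x_{S_j})$ that enforces coincidence of lattice sites becomes, after Fourier transform, the momentum-conservation constraint $\delta\bigl(\sum_{i\in S_k}(p_i - p_{i+1})\bigr)$ appearing in \eqref{eq:gaf}.

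The decisive step is the tadpole cancellation. Because the stopping rule and the $\sigma$-insertions generate exactly the signed sum over tadpole-containing partitions on the left-hand side of \eqref{eq:tadp_canc}, I would apply that combinatorial identity to collapse the alternating sum $\sum_{k}(-1)^k \sum_{\pi = \pi_k^c \cup \{S\}}$ to the sum over tadpole-free partitions $\pi = \pi_0$ on the right. This is where the self-energy renormalization earns its keep: the problematic gate contributions, which in momentum space correspond to the singular on-diagonal factor $\int \lvert\hat u(p-q)\rvert^2/E(q)\,d^3q$, are removed term by term. After this cancellation the remaining momentum integral is exactly \eqref{eq:gaf}, with the phase $\alpha = -i2\pi(p_1 + p_{l+2})\cdot(x-y)$ arising because the only uncontracted momenta feeding the external legs $x$ and $y$ are $p_1$ (from $A_l$) and $p_{l+2}$ (from $\overline{A_l}$), while the overall translation invariance forces $\E\,\lvert A_l(x,y)\rvert^2$ to depend only on $x-y$.

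The main obstacle I anticipate is bookkeeping the momentum labels correctly across the two conjugate chains and verifying that the $\sigma$-bullets match the defining self-consistent equation \eqref{eq:self} so that the tadpole cancellation \eqref{eq:tadp_canc} applies verbatim. In particular, one must be careful that $\hat u$ factors are correctly distributed as $\hat u(p_j - p_{j+1})$ over the full index range $\Upsilon_l$ and that the momentum-conservation deltas land on the differences $p_i - p_{i+1}$ rather than on the momenta themselves; this is precisely the Fourier-dual of the real-space coincidence $\delta(x_{S_k})$, and getting the index shifts right is the crux of the computation. Once the labels are in place the identity \eqref{eq:tadp_canc} does the rest mechanically, exactly as in the $(\mathcal N)$ case.
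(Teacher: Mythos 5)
Your proposal follows essentially the same route as the paper's proof: Fourier-represent each $R_r$ via \eqref{eq:R_r_O}, use the self-consistent equation \eqref{eq:self} to rewrite every $\sigma$-bullet as a contracted pair of $\hat u$-vertices, take the expectation via \eqref{eq:tree_graph}, and invoke the cancellation identity \eqref{eq:tadp_canc} to reduce to tadpole-free partitions; the passage from the Kronecker deltas $\delta(x_{S_k})$ to the momentum-conservation constraints and the reduction of the phase to $(p_1+p_{l+2})\cdot(x-y)$ via the forced delta function $\delta(p_1-p_{l+1}+p_{l+2}-p_{2l+2})$ are exactly as in the paper. The only step the paper includes that you omit is a regularization $V_\omega^\delta(x)=\sum_i \omega_i e^{-\delta|i|}u(x-i)$, introduced because $V_\omega$ itself is not summable and so its Fourier transform (and the interchange of the infinite site sums with the momentum integrals) only makes sense in a distributional limit $\delta\to 0$; this is a technical justification of the formal manipulation rather than a new idea, but you should be aware it is needed.
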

\begin{proof}
\hspace{7cm} \vskip 3mm \noindent
Let $V^\delta_\omega$ be a random potential of the form
\[
V^\delta_\omega(x) \ = \ \sum_{i\in k\Z^3} \omega_i e^{-\delta|i|}\, u(x-i)\,.
\]
Then $V^\delta_\omega \longrightarrow V_\omega$ in the strong operator topology  as $\delta$ converges to $0$. Similarly, we can define quantities $H^{\lambda,\delta}_\omega$, $R^\delta$, $R_r^\delta$, and $A_l^\delta$ by replacing $V_\omega$ with $V^\delta_\omega$. One can readily check that
\[R^\delta(x,y)\rightarrow R(x,y)\,;\quad R^\delta_r(x,y)\rightarrow R_r(x,y)\,,\quad A_l^\delta(x,y)\rightarrow A_l(x,y)\]
in the limit $\delta\rightarrow0$.
The advantage of working with the regularized random potential is due to the fact that it is summable and therefore admits Fourier transform. Namely, we have
\[\hat V^\delta_\omega( p)\ =\  \hat u(p)\,\hat \omega_\delta(p)\,,\]
where
\[\hat \omega_\delta(p) \ := \ \sum_{n\in\Z^3}e^{-i2\pi p\cdot
 n}\,\omega_n\,e^{-\delta|n|}\,.\]
Since by  \eqref{eq:self} we have
 \[\lambda^2 \int_{\T^3}|\hat u(p-q)|^2\,R_r(q)dq  \ = \ \sigma(p,E+i\epsilon)\,,\]
we can express $|A^\delta_l(x,y)|^2$ (analogously to \eqref{eq:reprAno}) as
\begin{multline}\label{eq:tree_graphdelta}
 |A^\delta_l(x,y)|^2 \ = \\    \lambda^{2l}\ \int_{(\T^3)^{2l+2}}  \, e^{i\beta}\, \frac{dp_{l+1}}{E(p_{l+1})}\,\frac{dp_{2l+2}}{E^*(p_{2l+2})}\
\prod_{j=1}^{l} \, \frac{dp_j}{E(p_j)}\  \prod_{j=l+2}^{2l+1} \,\frac{dp_j}{E^*(p_j)}\ \prod_{i\in \Upsilon_l}
\hat u(p_j-p_{j+1})\\  \times \  \sum_{k=0}^N(-1)^k\ \sum_{\substack{\pi\in \Pi_l:\\ \pi=\pi_k^c\cup \{S\}}} \   \prod_{i\in S}
\hat \omega_\delta(p_i-p_{i+1})\ \prod_{S_l\in\pi_k^c}\delta\left(\sum_{i\in S_l}p_i-p_{i+1}\right)
\,,\end{multline}
where $\beta:= 2\pi\{-(p_1+p_{l+2})\cdot x
+(p_{l+1}+p_{2l+2})\cdot y\}$.  It follows from \eqref{eq:tadp_canc} that
 \begin{multline*}
 \sum_{k=0}^N(-1)^k\, \sum_{\substack{\pi\in \Pi_l:\\ \pi=\pi_k\cup \pi_k^c}} \ \prod_{S_l\in\pi_k^c}\delta\left(\sum_{i\in S_l}p_i-p_{i+1}\right)\  \E  \left[\prod_{i\in S_j\in \pi_k}
\hat \omega_\delta(p_i-p_{i+1})\right]\\ \overset{d}{\longrightarrow}\ \sum_{\substack{\pi\in \Pi:\\ \pi=\pi_0}}\
\prod_{S_j\in \pi} c_{|S_j|}\delta\left(\sum_{i\in S_j}p_i-p_{i+1}\right)\,,
 \end{multline*}
where $\overset{d}{\longrightarrow}$ stands for the convergence (with respect to  $\delta$) in the distributional sense. Therefore, taking the expected value on the both sides of \eqref{eq:tree_graphdelta} as well as  $\delta\rightarrow0$ limit (where we use the smoothness of the integrand), we arrive to the expression  that coincides with \eqref{eq:gaf}, up to the prefactor $e^{i\beta}$ instead of $e^{i\alpha}$ in the integrand. But the product of the delta functions allows to replace $\beta$ with $\alpha$ (see Subsection \ref{sub:F} below), hence the result.

\end{proof}

\section{Proof of Lemmas \ref{lemma:rep_matr} and \ref{lemma:rep}}\label{sec:repr}
\begin{proof}[Proof of Lemma \ref{lemma:rep_matr}]
\hspace{7cm} \vskip 3mm \noindent
We observe that for any $x'\in Range\, P_x$ and   $y'\in Range\, P_y$
\be\label{eq:albnd}\E\, |A_l(x',y')|^2\ \le \ \|
{\mathcal A}_{x,y}\|_1\,,\ee
where ${\mathcal A}_{x,y}$ is defined in \eqref{eq:matrAl} and $\|\cdot\|_1$ stands for  the maximum absolute column sum norm. Indeed,
\[
 |A_l(x',y')|^2 \ = \ \langle x'| P_x A_l \ |y'\rangle\langle y'|\  A_l^* Px|x'\rangle \ \le \
\langle x'| P_x A_l P_y A_l^* Px|x'\rangle\,,\]
hence
\[\E\, |A_l(x',y')|^2\ \le \ \E\, \langle x'| P_x A_l P_y A_l^* Px|x'\rangle \ \le \ \|
{\mathcal A}_{x,y}\|\ \le \ \|
{\mathcal A}_{x,y}\|_1\,.\]
Therefore, using Lemma \ref{lem:renorm_matr}, we obtain that for such $x'$ and $y'$
\begin{multline}\label{eq:Al_matrbnd}
\E\, |A_l(x',y')|^2\ \le \  \lambda^{2l}\, \|{\mathcal D}\|^{2l}\, \sum_{\substack{\pi\in \Pi_l:\\ \pi=\pi_0}}\
\sum_{{\substack{x_j\in k\Z^3: \\ j\in\Upsilon_{l}}}}\ \prod_{S_j\in \pi}
c_{|S_j|}\delta(x_{S_j}) \\ \times \  \prod_{i=0}^{l}\left\|P_{x_i}R_rP_{x_{i+1}}\right\|_1\  \prod_{i=l+1}^{2l+1}\left\|\,P_{x_i}R^*_rP_{x_{i+1}}\right\|_1
 \,,
\end{multline}
with convention $x_{l+1}=y$, $x_0=x_{2l+2}=x$.

Now we are in position to use Lemma \ref{lem:propSigma} to bound the products of the norms on the right hand side of \eqref{eq:Al_matrbnd}, for all  $|\epsilon|<\kappa/2$ defined in this lemma and all $E<E_0$ with $E_0$ be given by \eqref{eq:Esigm} as
\be\label{eq:bndGA}
 \prod_{i=0}^{l}\left\|P_{x_i}R_rP_{x_{i+1}}\right\|_1\  \prod_{i=l+1}^{2l+1}\left\|\,P_{x_i}R^*_rP_{x_{i+1}}\right\|_1\ \le \ |\hat \Theta|^{2l+2}\ \prod_{i=0}^{2l+1} G_{E^*}(x_i,x_{i+1})\,,
\ee
where $G_{E^*}(x,y)$ is defined in \eqref{eq:G(xy)} and
$E^*=E^*_{\mathcal N}$ is given in the statement of Lemma
\ref{lemma:rep_matr}. We remind the reader that $G_{E^*}(x,y)$ is
positive for any $x,y\in\Z^3$.

Plugging \eqref{eq:bndGA} into \eqref{eq:Al_matrbnd} we get the estimate
\begin{multline}\label{eq:Al_matrbnd'}
\E\, |A_l(x',y')|^2\\ \le \  \lambda^{2l}\, \ |\hat \Theta|^{2l+2}\,\|{\mathcal D}\|^{2l}\, \sum_{\substack{\pi\in \Pi_l:\\ \pi=\pi_0}}\
\sum_{{\substack{x_j\in k\Z^3: \\ j\in\Upsilon_{l}}}}\ \prod_{S_j\in \pi}
c_{|S_j|}\delta(x_{S_j}) \ \prod_{i=0}^{2l+1} G_{E^*}(x_i,x_{i+1})\\ \le \ \lambda^{2l}\, \ |\hat \Theta|^{2l+2}\,\|{\mathcal D}\|^{2l}\, \sum_{\substack{\pi\in \Pi_l:\\ \pi=\pi_0}}\
\sum_{{\substack{x_j\in\Z^3: \\ j\in\Upsilon_{l}}}}\ \prod_{S_j\in \pi}
c_{|S_j|}\delta(x_{S_j}) \ \prod_{i=0}^{2l+1} G_{E^*}(x_i,x_{i+1})
 \,.
\end{multline}
The latter expression, however, coincides (up to the factor $\ |\hat \Theta|^{2l+2}\,\|{\mathcal D}\|^{2l}$) with the corresponding term for the random potential of the form
\[\tilde V_\omega(x) \ := \ \sum_{i\in \Z^3} \omega_i\,,\]
investigated in \cite{elgart} (c.f. Eq. 3.15 there). As a result, the bound \eqref{eq:l_matr} follows from Lemma 1.1 of \cite{elgart}.

To get \eqref{eq:lt_matr} note that it follows from Lemma \ref{lem:de}
that
\[\tilde A_N \ = \ A_N(H_r-E-i\epsilon) \ - \
A_{N-1} \Sigma\,.\]
We therefore obtain
\begin{multline}
\E \,|\tilde A_N(x,y)| \ \le \ \sum_{z\in\Z^3}\left\{\,\left(\E
\,|A_N(x,z)|^2\right)^{1/2}\,\cdot\,
|(H_r-E-i\epsilon)(z,y)|\right.
\\ \left.
+ \ \left(\E
\,|A_{N-1}(x,z)|^2\right)^{1/2}\,\cdot\,|\Sigma(z,y)|\,\right\}
\\ < \
\sum_{\substack{z\in\Z^3:\\ |z-y|\le 1}}\left(\E
\,|A_N(x,z)|^2\right)^{1/2} \
+ \ 2\,|\hat \Theta|\,\lambda^2\,\|D\|^2\,\sum_{\substack{z\in\Z^3:\\ |z-y|\in\hat \Theta}}\, \left(\E
\,|A_{N-1}(x,z)|^2\right)^{1/2}
\,,\nonumber
\end{multline}
provided $\lambda$ is sufficiently small, and where in the last step we have used \eqref{eq:bndSi}.

It now readily follows from \eqref{eq:l_matr}  that for $\lambda$ is sufficiently small, the right hand side of the above equation is
bounded by
\[C'\,\ |\hat \Theta|^{l+1}\,\|{\mathcal D}\|^{l}\,\sqrt{(4N)!\,E^*}
\,\left(C\ \ln^{9}E^*\ \frac{\lambda^{2}}{\sqrt{E^*}}\right)^{N/2}
\,e^{-\sqrt\frac{{E^*}}{12}\,|x-y|}\,,\]
with some generic constant $C'$. As a result, we have proved
\eqref{eq:lt_matr}.

Further, \eqref{eq:stand_matr} follows from the fact that $A_0(x,y)=R_r(x,y)$ and
\[|R_r(x,y)|\ \le\ G_{E^*}(x,y)\]
 by Lemma \ref{lem:propSigma}.  But the application of Lemma \ref{lem:freeGF} then shows the validity of  \eqref{eq:stand_matr}.

\end{proof}

\subsection{Feynman graphs}\label{sub:F}
At this point we have to introduce some additional notation:

\begin{defn} \label{def:equiv} We consider products of delta functions
with arguments that are linear combinations of the momenta $\{ p_1,
p_2,\ldots , p_{2 n +2}\}$. Two products of such delta functions
are called  {\em equivalent} if they determine the same affine
subspace of $\T^{2 n+2} = \{ p_1, p_2,\ldots , p_{2 n +2}\}$.
\end{defn}

One can obtain new delta functions from the given ones, by taking
linear combinations of their arguments. In particular, we can obtain
identifications of momenta.

\begin{defn}  \label{def:forced} The product $\delta (\sum_j a_j
p_j)$ of delta functions
$\Delta_\pi$  {\em forces} a new delta function, if $\sum_j a_j p_j =0$ is an identity in the affine subspace
determined by $\Delta_\pi$.
\end{defn}

One can readily see that in the integrand of rhs of (\ref{eq:gaf})
one has a forced delta function
$\delta(p_1-p_{l+1}+p_{l+2}-p_{2l+2})$, the fact used in Lemma \ref{lem:renorm}.

$A_{l,E^*}(x-y)$ is conveniently  interpreted in terms of the so
called Feynman graphs (the pseudograph, to be precise, since loops
and multiple edges are allowed here). The graph, associated with
particular partition $\pi$ of $\Upsilon_{n,n}$ is constructed
according to the following rules (see Figure \ref{fig:con} and
\ref{fig:conII}): We first draw two line segments, each containing
$n$ vertices (elements of $\Upsilon_{n,n}$).
 The vertices are joined by directed edges (momentum lines) representing the corresponding momenta:
$p_1, \ldots , p_{n+1}$ and $p_{n+2}, \ldots , p_{2n+2}$. To each
line $p_j$ we assign a propagator $F(p_j)$, with some given function
$F$, save momentum lines $p_1$ and $p_{n+2}$, which carry additional
phases $e^{-i2\pi p_1\cdot (x-y)}$ and $e^{-i2\pi p_{n+2}\cdot
(x-y)}$, respectively. For $\pi=\{S_j\}_{j=1}^m$ we identify all
vertices in each subset $S_j$ as the same vertex (in  Figure
\ref{fig:con}, the paired vertices are connected by  dashed lines).
\begin{figure}[ht]
\includegraphics[trim = 0mm 0mm 0mm 10mm, clip, width=10cm]{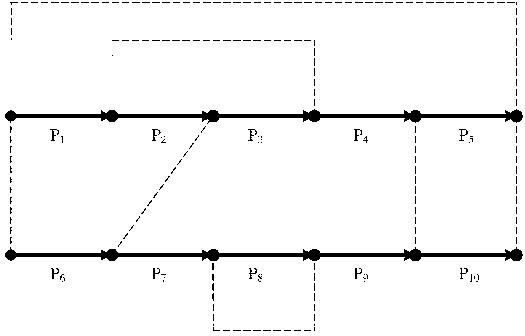}
\caption{Construction of the Feynman graph, part I, $n=4$. The
corresponding delta functions are $\delta(p_1-p_2+p_3-p_4)$,
$\delta(p_4-p_5+p_9-p_{10})$, $\delta(p_2-p_3+p_6-p_7)$, and
$\delta(p_7-p_9)$. The last delta corresponds to the tadpole. Note
that the sum of all momenta in the above delta functions gives a
forced delta function $\delta(p_1-p_5+p_6-p_{10})$, hence we can
introduce the dashed lines connecting vertices $1,6,7,$ and $12$,
identifying them as a single vertex.} \label{fig:con}
\end{figure}
Note that thanks to the existence of the forced delta function
$\delta(p_1-p_{l+1}+p_{l+2}-p_{2l+2})$, we can identify vertices
$1,l,l+1,2l$ as a single one, and therefore one can think about the
closed graph (with special rules that apply for momentum lines $p_1$
and $p_{l+2}$, mentioned above). To summarize, the outcome of this
construction is a directed closed graph, which is called the Feynman
graph associated with the partition $\pi$. The momenta in the graph
satisfy the Kirchhoff's first law, that is the total momenta
entering into each internal vertex add up to zero (if arrow faces
outward the vertex, we count its momentum with a minus sign). A
tadpole corresponds to the so-called {\em $0$-loop}, that is some
(directed) line of the graph claims one vertex as its both
endpoints. For a given Feynman graph $G$, one can choose a
particularly useful expression for the product of delta functions
$\Delta_\pi$. Choose any spanning tree of $G$ which does not contain
momentum lines $p_1,p_{l+2}$. The edges belonging to the spanning
tree will be called the {\em tree} edges (momentum lines), and all
the rest are the {\em loop} edges (since an addendum of any loop's
momentum line creates a loop). Let us enumerate the tree variables
as $u_1,...,u_k$, and loop variables as $w_1,...,w_n$, with say
$w_1=p_1,w_2=p_{l+2}$  (note that $k+n=2l+2$). The number $k$ of the
tree momenta coincides with the number of the delta functions in
$\Delta_\pi$.

\begin{figure}[ht]
\includegraphics[trim = 0mm 0mm 0mm 55mm, clip, width=10cm]
{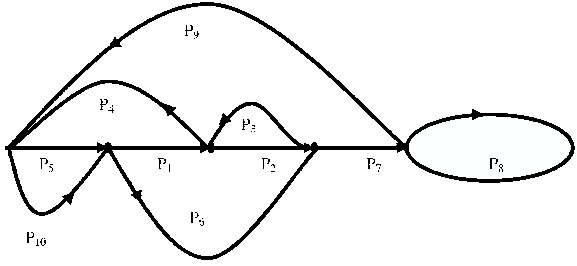}
\caption{Construction of the Feynman graph, part II: Identification
of the vertices. The tadpole corresponds here to $0$-loop.}
\label{fig:conII}
\end{figure}
 One can check (see e.g. \cite{EY}) that the product of
delta functions $\Delta_\pi$ is equivalent to
\begin{equation}\label{eq:deltapr}
\prod_{i=1}^k \delta(u_i-\sum_{j=1}^la_{ij}w_j)\,,
\end{equation}
with
\[
a_{ij}:= \begin{cases} \pm1 & \hbox{ loop that contains $u_i$ is created by adding $w_j$
to the spanning tree}  \cr
 0
 &  \hbox{
otherwise} \cr
\end{cases}\,.
\]
The choice of the sign depends on the mutual orientation of $u_i$
and $w_j$.

\begin{proof}[Proof of Lemma \ref{lemma:rep}]
\hspace{7cm} \vskip 3mm \noindent
We establish the exponential decay of ${\mathcal A}_{l,E}(x-y)$ in $|x-y|$  from the following analytic argument. To this end we generalize  the arguments used in Section 4. of \cite{elgart} to the {\it momentum--dependent} 
self energy $\sigma$ (defined in Eq.~\ref{eq:self}).

We start with some additional notation: Let $\mathcal R$ denote a  rectangle formed by the points
\[\{\ -1/2\pm i\sqrt \delta;\
1/2\pm i\sqrt \delta\}\,,\]
where the parameter $\delta=\delta_{\mathcal O}$ was introduced in
\eqref{eq:deltaE}. For a unit vector $e\in\Z^3$, we will decompose
$\T^3\ni p=p\cdot e\oplus p^\perp$, where  $p^\perp\in\T^2$. In what
follows we will use the norm $\|\cdot\| _{\infty,\mathcal R}$
defined as
\be\label{eq:normr} \|f\| _{\infty,{\mathcal R}}\ :=\  \max_{e}\sup_{p\cdot e\in{\mathcal R},\ p^\perp\in\T^2}|f(p)|\,.\ee
We note that for $u$ satisfying \eqref{eq:condonu} and $\delta$ sufficiently small, one has
\be\label{eq:condu'} \|\hat u\|_{\infty,{\mathcal R}} \ \le \ \|\hat u\|_{\infty}\,+\,CA^{-4} \sqrt{\delta} \,.\ee

We will show that for a general value of $l$,
\begin{equation}\label{eq:expdec}
{\mathcal A}_{l,E}(x)\ \le \ \|\hat u\|_{\infty,{\mathcal R}}^{2l}\cdot e^{-\sqrt{\delta/3}\,|x|}\ \hat {\mathcal A}_{l,E^*}(0)\,,
\end{equation}
where
\begin{multline}\label{eq:hatmathcalA}
\hat {\mathcal A}_{l,E^*}(0)\ :=\  \lambda^{2l}\,\int_{(\T^3)^{2l+2}}  \, e^{i\alpha}\ \frac{dp_{l+1}}{e(p_{l+1})+E^*}\,\frac{dp_{2l+2}}{e(p_{2l+2})+E^*}\
\prod_{j\in\Upsilon_l} \, \frac{dp_j}{e(p_j)+E^*}\\  \times \
\sum_{\substack{\pi\in \Pi_l:\\ \pi=\pi_0}} \  \prod_{S_k\in\pi}
c_{|S_k|}\,\delta\left(\sum_{i\in S_k}p_i-p_{i+1}\right)\,,
\end{multline}
with $\alpha$ defined in Lemma \ref{lem:renorm}.

The expression $\hat {\mathcal A}_{l,E^*}(0)$ has been  studied in Section 4 of \cite{elgart}. It was shown there that
\be\label{eq:A0}
\hat {\mathcal A}_{l,E^*}(0) \ \le \ (4l)!\, E^*\,\left(C\ln^9(E^*)\
\frac{\lambda^{2}}{\sqrt{E^*}}\right)^l\,.
\ee
Combining \eqref{eq:expdec}, \eqref{eq:A0} and \eqref{eq:condu'},  we obtain \eqref{eq:l}. Also, since  $A_0$ was defined to be equal $R_r$ in Lemma \ref{lemma:rep}, we get $\hat {\mathcal A}_{0,E^*}(0)=\left(G_{E^*}(0,0)\right)^2 $. Equation (\ref{eq:stand}) follows from the bound \eqref{eq:freeGF} on the free Green function.

The relation \eqref{eq:lt} is obtained analogously to the derivation of \eqref{eq:lt_matr} in the proof of Lemma \ref{lemma:rep_matr}.

\par
To prove that \eqref{eq:expdec} holds true let us choose  for any given $x\in \Z^3$  the index $\gamma\in\{1,2,3\}$ such that
\begin{equation}\label{eq:ga}
|x_\gamma|=\max_{i\in\{1,2,3\}} |x_i|\,.
\end{equation}
Then $|x_\gamma|\ge|x|/\sqrt3$. In order to extract the
exponential decay of ${\mathcal A}_{l,E}(x)$ we first perform the integration
in the right hand side of \eqref{eq:gaf} over the tree momenta, using
(\ref{eq:deltapr}).

Let us use the shorthand notation $\sum_\pi$ for
a sum over all possible partitions in \eqref{eq:gaf}, $c_\pi$
for a product of the corresponding $c_{Sj}$, $r_\pi$ for
the number of the delta functions containing the loop momentum
$w_1$ in the $\pi$'s partition, and $s_\pi$ will denote the number of
$\hat u$ terms involving $w_1$ after the integration of tree momentums.

Let $E(p)=e(p)-E-i\epsilon-\sigma(p, E+i\epsilon)$. We have
\begin{multline}\label{eq:decompal}
{\mathcal A}_{l,E}(x)\ = \ \lambda^{2l} \sum_\pi c_\pi\int
dw_1\,e^{-i2\pi
w_1\cdot x}\ \prod_{i=1}^{r_\pi} \,\frac{1}{E^\sharp(w_1+q_i)}\prod_{j=1}^{s_\pi}\hat{u}(w_1+Q_j)
\\ \times \ \int \e^{-i2\pi w_{2}\cdot x}\ \prod_{t\in \Phi'}\,dw_t\ \prod_{i=r_\pi+1}^{2n+2}\ \,\frac{1}{E^\sharp(q_i)}
\prod_{j=s_\pi+1}^{2n}\hat{u}(Q_j) \,,
\end{multline}
where $E^{\sharp}(p)$ stands for either $E(p)$ or $E^*(p)$, $\Phi'$ is a set of indices of loop momentum that does not
include $w_1$, and  $q_i$, $Q_j$ are some linear combinations of the loop
variables in $\Phi'$. Note now that
\begin{multline}
\int dw_1\prod_{i=1}^{r_\pi} \,\frac{1}{E^\sharp(w_1+q_i)}\,
e^{-i2\pi w_1\cdot x}\prod_{j=1}^{s_\pi}\hat{u}(w_1+Q_j)\\
= \ \int dw_1^\perp \,e^{-i2\pi (w_1\cdot x-(w_1\cdot e_\gamma) x_\gamma)}\\
\times \int_{-1/2}^{1/2} d(w_1\cdot e_\gamma)\ \prod_{i=1}^{r_\pi}
\ \frac{1}{E^\sharp(w_1+q_i)}\,e^{-i2\pi (w_1\cdot e_\gamma) x_\gamma}\ \prod_{j=1}^{s_\pi}\hat{u}(w_1+Q_j)\,.
\end{multline}

Without loss of generality, let us assume
that $x_\gamma>0$. The integrand as
a function of $w_1\cdot e_\gamma$ is $1$-periodic, analytic inside
the rectangle ${\mathcal R}_-:=\C_-\cap{\mathcal R}$  for sufficiently small $E^*$. Moreover, we have
\be\label{eq:bndonE}{\rm Re}\, e(p-i\sqrt \delta e_\gamma )\ \ge \ e(p)-2\pi^2\delta\ee
uniformly in $p\in\T^3$, provided $\epsilon$ is sufficiently small, where we have used the definition \eqref{eq:e(p)} and
\be\label{eq:sinext}
\sin(a+ib)\ = \ \sin a \cosh b +i \cos a \sinh b\,.\ee
Combining this bound with  \eqref{eq:bndonsigmcomp}, we get
\be\label{eq:maxEp}
\min\left(|E(p)|,|E( p-i\sqrt \delta e_\gamma)|\right) \ > \ e(p)\,+\,E^*\,, p\in\T^3\,.\ee
Moreover, the
periodicity of the integrand implies that the integrals over the vertical
segments of ${\mathcal R}_-$ coincide:
\begin{multline}
\int_{-1/2}^{-1/2-i\sqrt \delta} d(w_1 \cdot
e_\gamma)\prod_{i=1}^{r_\pi} \,\frac{1}{E^\sharp(w_1+q_i)}\,
e^{-i2\pi x_\gamma\, (w_1\cdot e_\gamma )}
\prod_{j=1}^{s_\pi}\hat{u}(w_1+Q_j)
\\ = \
\int_{1/2}^{1/2-i\sqrt \delta} d(w_1\cdot e_\gamma)\prod_{i=1}^{r_\pi}
\,\frac{1}{E^\sharp(w_1+q_i)}\,e^{-i2\pi x_\gamma\, (w_1\cdot e_\gamma )}\,\prod_{j=1}^{s_\pi}\hat{u}(w_1+Q_j).
\end{multline}
Therefore
\begin{multline}
\left|\int_{\T} d(w_1\cdot e_\gamma)\prod_{i=1}^{r_\pi}
\,\frac{1}{E^\sharp(w_1+q_i)}\,e^{-i2\pi x_\gamma\, (w_1\cdot e_\gamma )}\prod_{j=1}^{s_\pi}\hat{u}(w_1+Q_j)\right| \\
= \ \left|\int_{\T-i\sqrt \delta}
d(w_1\cdot e_\gamma)\prod_{i=1}^{r_\pi} \,\frac{1}{E^\sharp(w_1+q_i)}\,
e^{-i2\pi x_\gamma\, (w_1\cdot e_\gamma )}\prod_{j=1}^{s_\pi}\hat{u}(w_1+Q_j)\right|\\
\le \ \|\hat u\|_{\infty,\mathcal R}^{s_\pi}\cdot e^{-x_\gamma\sqrt{\delta}}\int_{\T}
d(w_1\cdot e_\gamma)\prod_{i=1}^{r_\pi} \,
\left|\frac{1}{E\left(w_1+q_i-ie_\gamma\sqrt \delta\right)}\right|\\
\le \ \|\hat u\|_{\infty,\mathcal R}^{s_\pi}\cdot e^{-|x|\sqrt{E^*/3}}\,\int_{\T} d(w_1\cdot
e_\gamma)\prod_{i=1}^{r_\pi} \,\frac{1}{e(w_1+q_i)+E^*}\,,
\end{multline}
where in the last step we have used \eqref{eq:maxEp}.
Using the estimate \eqref{eq:bndonsigmcomp} again, we also have
\[|E(p)|\ > \ e(w_1+q_i)+E^*\,, \quad p\in\T^3\,.\]

Putting everything together on the right hand side of \eqref{eq:decompal}, we get the bound \eqref{eq:expdec}.

\end{proof}


\appendix
\section{Bounds on the free Green function}\label{sec:append} The
 free Green function $G_E(x,y)$ was defined in \eqref{eq:G(xy)}. We have
\begin{lemma}\label{lem:freeGF}
Define the function
$\psi_\alpha\in l_2(\Z_+)$ as
\[\psi_\alpha(r) \ = \ \,e^{-r\frac{\sqrt{-E}}{\alpha}} \,
\max\left((-E)^{(d-2)/2}\,,\,(1+r)^{(2-d)}\right)\,.\] Then for
$d\ge3$  and $-1<E<0$ we have
\begin{equation}\label{eq:freeGF}
0\ < \ G_E(x,y)\ <\ C_d\, \,\psi_{3d}(|x-y|)\,,
\end{equation}
for all $x,y\in\Z^d$.
\end{lemma}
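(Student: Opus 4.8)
The plan is to prove the two-sided bound through the \emph{subordination} (heat-kernel) representation of the resolvent, which renders both the positivity and the decay estimate transparent, and which matches the polynomial/exponential structure of $\psi_{3d}$ more naturally than a direct contour shift. Set $m:=\sqrt{-E}$, so that the Fourier representation reads $G_E(x,y)=\int_{\T^d}\frac{\e^{\im 2\pi(x-y)\cdot p}}{e(p)+m^2}\,d^dp$. Since $-\Delta/2$ is nonnegative and $e(p)+m^2\ge m^2>0$, I would first record
\[
G_E(x,y)\ =\ \int_0^\infty \e^{-s m^2}\,p_s(x,y)\,ds\,,\qquad p_s:=\e^{s\Delta/2}\,.
\]
Because $\Delta=\sum_{\alpha=1}^d\Delta_\alpha$ is a sum of commuting one–dimensional discrete Laplacians, the kernel factorizes as $p_s(x,y)=\prod_{\alpha=1}^d \e^{-s}I_{|x_\alpha-y_\alpha|}(s)$, with $I_n$ the modified Bessel function. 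As $I_n(s)>0$ for $s>0$, every factor is strictly positive, hence $G_E(x,y)>0$, which already gives the lower bound in \eqref{eq:freeGF}.

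For the upper bound I would feed into the subordination formula a uniform Gaussian/large–deviation estimate for the lattice heat kernel. The standard Bessel bounds $\e^{-s}I_n(s)\le C(1+s)^{-1/2}$, together with the ballistic (super-exponential) decay valid for $s\lesssim n$, combine into a single bound
\[
p_s(x,y)\ \le\ C_d\,(1+s)^{-d/2}\,\exp\!\left(-c\,\frac{|x-y|^2}{s+|x-y|}\right)\,,
\]
holding for all $s>0$ and all $x,y\in\Z^d$. Substituting this yields, with $r:=|x-y|$,
\[
G_E(x,y)\ \le\ C_d\int_0^\infty \e^{-s m^2}(1+s)^{-d/2}\exp\!\left(-c\,\frac{r^2}{s+r}\right)ds\,,
\]
so that everything reduces to estimating this one–dimensional integral.

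I would estimate the integral by splitting the range of $r$ at $r\sim m^{-1}$. In the near regime $r\le m^{-1}$ one uses $\e^{-sm^2}\le 1$, and since the target exponential prefactor $\e^{-mr/(3d)}$ is then bounded below by a constant, it suffices to show $\int_0^\infty(1+s)^{-d/2}\e^{-cr^2/(s+r)}\,ds\le C_d(1+r)^{2-d}$; the dominant contribution comes from $s\sim r^2$ and reproduces the massless behaviour $r^{2-d}$ for $d\ge3$ (equivalently, $\e^{-sm^2}\le1$ gives $G_E\le G_0$, the massless lattice Green function, which satisfies this bound for $d\ge3$). In the far regime $r> m^{-1}$ the integrand concentrates near the saddle $s^\ast\sim r/m$, giving a value $\lesssim m^{(d-2)/2}r^{(2-d)/2}\e^{-c_0 mr}= m^{d-2}\,(mr)^{-(d-2)/2}\e^{-c_0 mr}$ with $c_0$ of order one; since $mr>1$ one has $(mr)^{-(d-2)/2}\le1$ and $\e^{-c_0 mr}\le\e^{-mr/(3d)}$, so this is bounded by $C_d\,m^{d-2}\e^{-mr/(3d)}$, precisely the far-regime form of $\psi_{3d}$. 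Collecting the two regimes produces the factor $\max\big(m^{d-2},(1+r)^{2-d}\big)$ and the claimed exponential rate.

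The main obstacle I anticipate is twofold: establishing the uniform heat-kernel bound across both the diffusive ($s\gtrsim r$) and ballistic ($s\lesssim r$) regimes with the correct Gaussian exponent, and then carrying out the Laplace/saddle estimate of the $s$–integral carefully enough to extract the exact $\max$–prefactor rather than a crude constant one (a plain contour shift as in the proof of Lemma \ref{lemma:rep} delivers the exponential decay but loses the $m^{d-2}$ factor in the far field). The generous slack in the stated rate $\sqrt{-E}/(3d)$, which is much smaller than the true rate $\sim\sqrt{-E}$, is exactly what lets the sub-leading polynomial factors $(mr)^{\pm}$ coming out of the Laplace method be absorbed into the exponential, so that only the crude bounds above are needed and never the sharp Bessel asymptotics.
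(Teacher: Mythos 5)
Your proposal is correct in outline but follows a genuinely different route from the paper. The paper works entirely in momentum space: it singles out the coordinate direction $e_\gamma$ carrying the largest component of $w=x-y$, shifts the contour of the $p\cdot e_\gamma$ integration into the complex strip by a $q$-dependent amount $\sqrt{(\hat e(q)-E)/(6d)}$ (periodicity killing the vertical segments), which produces the factor $e^{-|w|\sqrt{e(q)-E}/(3d)}\,(e(q)-E)^{-1/2}$, and then splits the remaining $(d-1)$-dimensional $q$-integral into the region $\{e(q)\le -E\}$ and its complement; these two regions are precisely the source of the two entries $(-E)^{(d-2)/2}$ and $(1+r)^{2-d}$ in the maximum. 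Your subordination route $G_E=\int_0^\infty e^{-s(-E)}p_s\,ds$ with the factorized Bessel kernel is an equally standard and legitimate path: it gives the (well-known, and in the paper merely cited) positivity for free, and it localizes the two regimes of the maximum in the time variable (diffusive $s\sim r^2$ versus saddle $s\sim r/\sqrt{-E}$) rather than in momentum space. Two points deserve real care if you execute this. First, the near/far dichotomy in $s$ is more delicate than a one-line saddle evaluation: bounding the exponential by its minimum and integrating $(1+s)^{-d/2}$ over all $s$ yields only a constant prefactor, not $(-E)^{(d-2)/2}$, so you must retain a fixed fraction $\theta<1$ of the exponent (to keep the rate above $\sqrt{-E}/(3d)$) and use the remaining fraction to force the integral to concentrate where $(1+s)^{-d/2}\le C\min((-E)^{d/2},(1+r)^{-d})\cdot(\dots)$ — this works, but it is exactly the step you flag as the obstacle. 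Second, the numerical rate $1/(3d)$ is not free: the constant $c$ in your uniform bound $p_s\le C_d(1+s)^{-d/2}\exp(-c|x-y|^2/(s+|x-y|))$ must satisfy roughly $\min(c,\sqrt c)>1/(3d)$ after accounting for the loss $|w\cdot e_\gamma|\ge |w|/\sqrt d$ across coordinates; the crude interpolation between the sup bound $e^{-s}I_n(s)\le C(1+s)^{-1/2}$ and the Chernoff bound halves $c$ and can drop below threshold, so you need the combined sharp Bessel estimate $e^{-s}I_n(s)\le C(1+s)^{-1/2}e^{-n^2/(2(n+s))}$ in one piece. Neither issue is a conceptual gap — the slack in $3d$ is indeed what makes the whole scheme close — but both must be tracked explicitly for the lemma as stated.
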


\begin{remark}
A similar statement is known to hold on $\R^d$, \cite{Salmhofer1999}.
We are not aware of its lattice version in the existing literature.
The positivity of $G(x,y)$ on the lattice is well known, so it is an
upper bound we are after here.
\end{remark}

\begin{proof}
\hspace{7cm} \vskip 3mm \noindent
In what follows, we will use the following properties of the
function $\psi$ for $d\ge3$ and $E^*$ sufficiently small:
\begin{enumerate}
\item[(a)] \[\|\psi_\alpha\|_\infty \ = \ 1\,;\]
\item[(b)] \[\sum_{x\in \Z^d}\psi_{\alpha}(|x-y|)\ = \ \frac{C_d\alpha}{-E} \mbox{ for
any }y\in \Z^d\,; \]
\item[(c)] \[\psi_\alpha(|r\pm b|) \ \le\ C(\Theta)\,\psi_\alpha(r)\mbox{ for }
0<b<\diam(\Theta)\,;\]
\item[(d)]
\[\prod_{i=1}^{2n+1}\,\psi_\alpha(|x_{i-1}-x_i|) \ \le
e^{-|x_{2n+1}-x_0|}\frac{\sqrt{-E}}{2\alpha}\
\prod_{i=1}^{2n+1}\,\psi_{\alpha/2}(|x_{i-1}-x_i|)\,.\]
\end{enumerate}
Suppressing the subscript $E$ in the free Green function, we have
\[G(x,y)\ = \ \int_{\T^d}e^{i2\pi(x-y)p}\frac{d^dp}{e(p)-E}
\ =  \ \int_{\T^d}e^{i2\pi(x-y)p}\frac{d^dp}{e(p)-E}\,.\]  Let
$w=x-y$. For any given $w\in \Z^d$ let us choose
$\gamma\in\{1,\ldots,d\}$ so that
\begin{equation}\label{eq:gae'}
|w\cdot e_\gamma|=\max_{i\in\{1,\ldots,d\}} |w\cdot e_i|\,.
\end{equation}
Then
\begin{equation}\label{eq:bigcomp}
|w\cdot e_\gamma|\ge|w|/\sqrt{d}\,.
\end{equation}
Note that
\be\label{eq:decompmom}
\int dp\,\frac{1}{e(p)-E}\,e^{-i2\pi p\cdot w}\
= \ \int dq \,e^{-i2\pi q\cdot w}\int_{-1/2}^{1/2} d(p\cdot
e_\gamma) \,\frac{1}{e(p)-E}\,e^{-i2\pi (p\cdot e_\gamma  w\cdot
e_\gamma)}\,,
\ee
where $q$ stands for the $d-1$ dimensional vector obtained from $p$
by removing its $\gamma$ component (for $d=1$ the argument below becomes completely straightforward, so we will only consider $d\ge2$). Without loss of generality, let
us assume that $w\cdot e_\gamma>0$. Let 
\[\hat e( q)=2\sum_{\alpha\neq\gamma}\sin^2(\pi p\cdot e_\alpha)\,.\]It is easy to check that the
integrand as a function of $p\cdot e_\gamma$ is $1$-periodic,
analytic  inside the rectangle formed by the points
$$\{-1/2;\ -1/2+i\sqrt{\frac{\hat e(q)-E}{6d}};\ 1/2+i\sqrt{\frac{\hat e(q)-E}{6d}};\ 1/2\}$$ for a 
sufficiently small value of $-E>0$: Indeed, using 
$\sin(a+ib)=\sin a \cosh b +i \cos a \sinh b$ one can check that for any  $-1<E<0$ and $\epsilon$ satisfying 
\[0\le\epsilon\le
\sqrt{\frac{\hat e(q)-E}{6d}}\]  we have 
$${\rm Re}\, e(p+i\epsilon e_\gamma )-E\ge (e(p)-E)/2\,,$$
uniformly in $q$. 
 Moreover, the
periodicity implies that the integrals over the vertical segments
coincide:
\begin{multline}
\int_{-1/2}^{-1/2+i\sqrt{\frac{e(q)-E}{6d}}} d(p \cdot
e_\gamma)\frac{1}{e(p)-E}\,e^{-i2\pi (p\cdot e_\gamma  w\cdot
e_\gamma)}
\\ = \
\int_{1/2}^{1/2+i\sqrt{\frac{e(q)-E}{6d}}} d(p\cdot e_\gamma)
\,\frac{1}{e(p)-E}\,e^{-i2\pi (p\cdot e_\gamma  w\cdot
e_\gamma)}\,.
\end{multline}
Therefore
\begin{multline}
\left|\int_{-1/2}^{1/2} d(p\cdot e_\gamma)
\,\frac{1}{e(p)-E}\,e^{-i2\pi (p\cdot e_\gamma  w\cdot
e_\gamma)}\right| \\
= \
\left|\int_{-1/2+i\sqrt{\frac{e(q)-E}{6d}}}^{1/2+i\sqrt{\frac{e(q)-E}{6d}}}\
d(p\cdot e_\gamma) \,\frac{1}{e(p)-E}\,e^{-i2\pi (p \cdot e_\gamma
w\cdot e_\gamma)}\right|\\
\le \ 2\,e^{-w\cdot
e_\gamma\sqrt{\frac{e(q)-E}{6d}}}\int_{-1/2}^{1/2}
d(p\cdot e_\gamma) \,\frac{1}{e(p)-E}\\
\le \ 4\,e^{-|w|\frac{\sqrt{e(q)-E}}{3d}}\
\frac{1}{\sqrt{e(q)-E}}\,,
\end{multline}
where in the last step we used  \eqref{eq:bigcomp}. We can
consequently estimate the right hand side of \eqref{eq:decompmom} by
\[4\, \int dq \,e^{-|w|\frac{\sqrt{e(q)-E}}{3d}}\
\frac{1}{\sqrt{e(q)-E}}\,.\] To estimate the latter integral, we
split $\T^{d-1}$ into $B:=\{q\in \T^{d-1}: \ e(q)\le -E\}$ and
$\sim B:=\T^{d-1}\setminus B$. Then
\begin{multline}\label{eq:B}
\int_B dq \,e^{-|w|\frac{\sqrt{e(q)-E}}{3d}}\
\frac{1}{\sqrt{e(q)-E}} \ \le \ \int_B dq
\,e^{-|w|\frac{\sqrt{-E}}{3d}}\ \frac{1}{\sqrt{-E}} \\ \le \ C_d
\,e^{-|w|\frac{\sqrt{-E}}{3d}} \, (-E)^{(d-2)/2}\,,\end{multline}
and
\begin{multline}\label{eq:simB}
\int_{\sim B} dq \,e^{-|w|\frac{\sqrt{e(q)-E}}{3d}}\
\frac{1}{\sqrt{e(q)-E}} \ \le \ \int_{\sim B} dq
\,e^{-|w|\frac{\sqrt{e(q)}}{3d}}\ \frac{1}{\sqrt{e(q)}} \\ \le \
\int_{\sim B} dq \,e^{-|w|\frac{2}{3d}\sqrt{2q^2}} \,
\frac{\pi}{2\sqrt{2q^2}} \\ \le \   C_d
\,e^{-|w|\frac{2\sqrt{2-E}}{3d}}
\,\sum_{k=0}^{d-2}\frac{(d-2)!}{k!}\frac{(-E)^k}{|w|^{d-k-1}}
\,,\end{multline} for $d\ge2$, and where in the penultimate step we
have used Jordan's inequality. Summing up \eqref{eq:B} and
\eqref{eq:simB},  we arrive to \eqref{eq:freeGF}.

\end{proof}
Another useful property of the free Green function is captured by the following assertion:
\begin{lemma}\label{lem:freeGFprop}
For all $E<0$ and all $\Z^3\ni x\neq0$ we have
\begin{equation}\label{eq:freeGFbn}
\frac{1}{6-2E}\ < \ \frac{G_E(0,x)}{G_E(0,x+e)}\ <\ 6-2E\,,
\end{equation}
where $e\in\Z^3$ is any unit vector.
\end{lemma}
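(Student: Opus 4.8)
The plan is to reduce the inequality to the lattice difference equation satisfied by $G_E$ and then read off both bounds from the \emph{positivity} of the free Green function, established in \eqref{eq:freeGF}. Writing $g(x) := G_E(0,x) = G_E(x,0)$ (the kernel is symmetric since $-\Delta/2 - E$ is real and self-adjoint for real $E$), the defining relation $(-\Delta/2 - E)\,G_E(\cdot,0) = \delta_0$ together with the explicit Laplacian \eqref{eq:Laplop} gives, after multiplying through by $2$,
\[
(6-2E)\,g(x) \ - \ \sum_{|f|=1} g(x+f) \ = \ 2\,\delta_{x,0}\,.
\]
Thus for every $x\neq 0$ one has the homogeneous recursion $(6-2E)\,g(x) = \sum_{|f|=1} g(x+f)$, while at the source point $(6-2E)\,g(0) = 2 + \sum_{|f|=1} g(f)$. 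Each of these is an equality between a single quantity and a sum of six \emph{strictly positive} terms, which is exactly what I will exploit.

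For the lower bound I would apply the homogeneous recursion at the point $x$ itself, which is legitimate because $x\neq 0$. Among the six neighbors $x+f$ the vector $x+e$ appears (take $f=e$), so discarding the other five positive terms yields $(6-2E)\,g(x) > g(x+e)$, that is $g(x)/g(x+e) > 1/(6-2E)$. For the upper bound I would instead examine the point $x+e$: when $x+e\neq 0$ the recursion there reads $(6-2E)\,g(x+e) = \sum_{|f|=1} g(x+e+f)$, and the term with $f=-e$ equals $g(x)$, so again dropping positive terms gives $(6-2E)\,g(x+e) > g(x)$, i.e. $g(x)/g(x+e) < 6-2E$.

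The only configuration requiring separate care is $x = -e$, where the neighbor $x+e$ coincides with the source $0$ and the homogeneous recursion is unavailable. There I would use the inhomogeneous identity at the origin, $(6-2E)\,g(0) = 2 + \sum_{|f|=1} g(f)$, in which $g(-e)$ occurs as one of the summands; hence $(6-2E)\,g(0) > g(-e)$, which is precisely the upper bound, the extra $+2$ only strengthening it. I do not anticipate any genuine obstacle: the whole argument is immediate once positivity and the difference equation are in hand, and the single subtlety is remembering to treat the boundary case $x=-e$ through the inhomogeneous equation rather than the homogeneous recursion.
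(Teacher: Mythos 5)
Your proof is correct and follows essentially the same route as the paper: the difference equation $(6-2E)\,G_E(0,x)=\sum_{|f|=1}G_E(0,x+f)$ for $x\neq 0$ combined with strict positivity of the free Green function. Your explicit treatment of the boundary case $x=-e$ via the inhomogeneous equation at the origin is a detail the paper's proof leaves implicit, but the argument is the same.
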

\begin{proof}
\hspace{7cm} \vskip 3mm \noindent
Note that for $x\neq 0$ one has
\[\langle 0| (-\frac{1}{2}\Delta-E)\,(-\frac{1}{2}\Delta-E)^{-1}|x\rangle \ = \ 0\,.\]
Inserting the partition of identity $I=\sum_{z\in\Z^3}|z\rangle\langle z|$ between the operators on the right hand side, and using \eqref{eq:Laplop}, we obtain the well known identity
\[\sum_{e\in\Z^3:\ |e|=1}G_E(-e,x) \ = \ (6-2E)G_E(0,x)\,.\]
By translation invariance, $G_E(-e,x)=G_E(0,x+e)$. Since $G_E(0,x)>0$ for any $x\in\Z^3$, we arrive to \eqref{eq:freeGFbn}.

\end{proof}
\section{Properties of the self energy $\sigma$}\label{sec:appendI}
\subsection{Properties of the solution of
\eqref{eq:self}}\label{subsec:sigmaprop}
In this section we establish the existence, periodicity, and
analyticity of the self energy operator $\sigma(p,E)$ introduced in \eqref{eq:self}.
We will use the following
inequalities (that can be deduced from \cite{Joyce}):
\be\label{eq:freegr} \int_{\T^3} \,d^3q\, \frac{1}{e(q)}< 2 \,,\quad
\int_{\T^3} \,d^3q\, \frac{1}{(e(q)+\epsilon^2)^2}<
\frac{1}{\epsilon}\,.\quad \ee
To prove the existence, we introduce the space
\[L(\T^3)\ =\
\{f\,:\ \T^3\rightarrow\C\,\big| \norm{f}_\infty < \infty\,,f\,\,
\mbox{ is real analytic}\}\,.\]
We define a map $T_\epsilon\,:\, L(\T^3)\rightarrow L(\T^3)$ pointwise as
\begin{equation}\label{eq:self1}
(T_\epsilon f)(p)=\lambda^2\int_{\T^3} \,d^3q\, \frac{\left|\hat
u(p-q)\right|^2}{e(q)-E-i\epsilon-f(q)}\,.
\end{equation}
We have $T_{\epsilon} B_{\beta}(0)\subset B_{\beta}(0)$, where
$B_{\beta}(0)$ is a ball (in $\|\cdot\|_\infty$ topology) of radius $\beta$ centered at the origin,
and
\[\beta \:= \ 2\lambda^2 \|\hat u\|_\infty^2\,.\]
Indeed, for $f\in B_{\beta}(0)$,
\[\lambda^2\left|\int_{\T^3} \,d^3q\, \frac{\left|\hat
u(p-q)\right|^2}{e(q)-E-i\epsilon-f(q)}\right| \le
\lambda^2\int_{\T^3} \,d^3q\, \frac{\|\hat u\|_\infty^2}{e(q)}<
2\lambda^2 \|\hat u\|_\infty^2\,.\]

Consider now the ball $B_\gamma(0)$ of the radius
\[\gamma\ :=\ \min(-E-2\lambda^4 \|\hat u\|_\infty^4,2\lambda^2 \|\hat u\|_\infty^2).\]
Then $T_\epsilon$ is a contraction on $B_\gamma(0)$. Indeed, let
$f,g\in B_\gamma(0)$, then
\begin{multline}\label{eq:fixpnt}
|(T_\epsilon f)(p)-(T_\epsilon g)(p)|\\ \le \lambda^2\int_{\T^3} \,d^3q\,
\frac{\left|\hat
u(p-q)\right|^2|f(q)-g(q)|}{|e(q)-E-i\epsilon-f(q)|\,|e(q)-E-i\epsilon-g(q)|}\\
\le \lambda^2\|f-g\|_\infty \int_{\T^3} \,d^3q\, \frac{C_\delta^2}
{(e(q)+2C^4\lambda^4)^2}\ \le
\frac{1}{\sqrt 2}\|f-g\|_\infty\,,
\end{multline}
where we have used \eqref{eq:freegr} in the last step. Hence by
the Banach fixed point theorem, the self consistent equation
\eqref{eq:self} has a single valued solution $\sigma(p,E+i\epsilon)$
for all $p\in \T^3$ and all
\[E<E_0:=-2\lambda^2 \|\hat u\|_\infty^2-2\lambda^4 \|\hat u\|_\infty^4\,.\]
The function $\sigma(p,E+i\epsilon)$ satisfies
\be\label{eq:bndonsigm}
\|\sigma\|_\infty\ \le \ \min(-E-2\lambda^4 \|\hat u\|_\infty^4,2\lambda^2 \|\hat u\|_\infty^2)\,.
\ee
\par Next we establish $1$-periodicity of the above solution (in the real space). To this end, we note that since $\hat{u}(p-q)=\sum_{n\in\Z^3}u(n)e^{2\pi i(p-q)n}$,
we have
\[\abs{\hat{u}(p-q)}^2=\sum_{m,n\in\Z^3}u(m)u(n)e^{2\pi i(p-q)(n-m)}\,.\]
Hence
\begin{multline}\label{eq:periodpr}
\lambda^2\int_{\T^3} \,d^3q\, \frac{\left|\hat
u(p-q)\right|^2}{e(q)-E-i\epsilon-f(q)} \\
=\lambda^2\sum_{m,n\in\Z^3} u(m)u(n)e^{2\pi ip(n-m)}
\int_{\T^3} \,d^3q\, \frac{e^{2\pi iq(m-n)}}{e(q)-E-i\epsilon-f(q)}\,,
\end{multline}
and periodicity of $\sigma(p,E+i\epsilon)$ follows from the periodicity of
$e^{2\pi ip(n-m)}$.
\par
Finally, we show analyticity. Fix a unit vector $e\in\Z^3$, and let $p_e:=p\cdot e$, $n_e:=n\cdot e$ for $n\in \Z^3$.  Using \eqref{eq:freegr}  one can readily check that
\[\left\|\frac{d^k\sigma(p,E+i\epsilon)}{d^k_{p_e}}\right\|_\infty \ \le \
2\lambda^2\sum_{m,n\in\Z^3}\abs{u(m)u(n)(2\pi)^k(n_e-m_e)^k}
\ \le \ C\,\lambda^2A^{-k+3}k!\,,
\]
for all $p_e\in\T$, and where $A$ is given by \eqref{eq:condonu}, and $C$ is some generic constant.
This implies that $\sigma(p,E+i\epsilon)$ is real analytic in $p_e$ variable and admits the complex analytic continuation to the rectangle ${\mathcal R }$ introduced in the paragraph followed by \eqref{eq:normr}. It follows from \eqref {eq:self} and \eqref{eq:freegr}  that we also have in this energy interval the bound
\be\label{eq:bndonsigmcomp}
\|\sigma\|_{\infty,{\mathcal R}}\ \le \ 2\lambda^2 \,\|\hat u\|_{\infty,{\mathcal R}} \,,
\ee
with the norm $\|\cdot\| _{\infty,{\mathcal R}}$   defined in
\eqref{eq:normr}.
\subsection{Properties of the solution of
\eqref{eq:self'}}
We proceed as in the previous subsection. We will be interested in the range of energies satisfying
\be\label{eq:rngEmat}E\ <  \ -\kappa\,,\ee
with $ \kappa=4n\,\lambda^2\,\|u\|^2_\infty $.
It follows from the block diagonal structure of the operator $\Sigma$ defined in
\eqref{eq:Sigm} that for any pair $\Sigma_1,\, \Sigma_2$ of such matrices
(which  correspond to $\sigma_1,\,\sigma_2$, accordingly) we have
$\|\Sigma_1-\Sigma_2\| = \|\sigma_1-\sigma_2\| $. Consider a ball
\[B:=\{\sigma\in M_{n,n}:\ \|\sigma\|\le\kappa/2\}\,,\]
 and a map
\[T:\ M_{n,n}\rightarrow M_{n,n}\,,\quad T\sigma\ :=\ \lambda^2\,D\,
S \, D \,,\]
where $S$ is defined in \eqref{eq:Selem} and $D$ in the paragraph followed by \eqref{eq:Sigm}. We claim that $TB\subset B$. Indeed, for any $\sigma\in B$, we have
\[\|T\sigma\| \ \le \ \lambda^2\,\|D\|^2\,\|S\|\,.\]
To estimate $\|S\|$, we observe that by construction of $S$,
\[\|S\| \ = \ \left\|P\,\left(-\Delta/2-E-i\epsilon-\Sigma\right)^{-1}\,P\right\|\,,\]
where $P$ is a projector onto $\supp\, \hat \Theta$. For the energies $E$ that satisfy \eqref{eq:rngEmat} and $\sigma \in B$ we have
\be\label{eq:KSig}K(\Sigma) \ := \ Re\, (-\Delta/2-E-i\epsilon-\Sigma)\ >\ -\Delta/2+ \kappa/2\,.\ee
But for an operator $K=A+iB$ with positive $A$ and hermitian $B$,
and a hermitian operator $F$ we have
\begin{multline*}\left\|F(A+iB)^{-1}F\right\| \ = \  \left\|FA^{-1/2}(I+iA^{-1/2}BA^{-1/2})^{-1}A^{-1/2}F\right\|\\ \le \  \left\|FA^{-1/2}\right\|\, \left\|(I+iA^{-1/2}BA^{-1/2})^{-1}\right\|\,\left\|A^{-1/2}F\right\| \ \le \   \left\|FA^{-1}F\right\|\,.\end{multline*}
Hence
\begin{multline*}\left\|P\,\left(-\Delta/2-E-i\epsilon-\Sigma\right)^{-1}\,P\right\|\ \le\ \left\|P\,\left(Re\,(-\Delta/2-E-i\epsilon-\Sigma\right))^{-1}\,P\right\| \\ \le \
\left\|P\,\left(-\Delta/2+\kappa/2\right)^{-1}\,P\right\| \ \le \ 2n\,,\end{multline*}
where in the last step we have used the fact that the norm of the
matrix is dominated by  its trace norm, the positivity of
$P\left(-\Delta/2+\kappa/2\right)^{-1}P$, and bound
\eqref{eq:freegr}. Putting everything together, we get
\[\|T\sigma\| \ \le \ 2n\,\lambda^2\,\|D\|^2 \ = \ \kappa/2\,.\]
By Brouwer's fixed point theorem the map $T$ then has  at least one  fixed point in
$B$. Since we are interested in proving the uniqueness, we will show that $T$ is also a
contraction on $B$. To this end, let $\sigma_{1,2}\in B$. Then using the second
resolvent identity
\begin{multline}\label{eq:contrsigm}
\|T \sigma_1-T\sigma_2\| \ \le \
\lambda^2\,\|D\|^2\,\|\Sigma_1-\Sigma_2\|\ \times\\
\left\|P\,\left(-\Delta/2-E-i\epsilon-\Sigma_1\right)^{-1}\right\|\,
\left\|\left(-\Delta/2-E-i\epsilon-\Sigma_2\right)^{-1}\,P\right\|\,.
\end{multline}
Now observe that
\begin{multline*} P\,\left(-\Delta/2-E-i\epsilon-\Sigma_1\right)^{-1}\,\left(-\Delta/2-E+i\epsilon-\Sigma^*_1\right)^{-1}\,P \\ = \ P\,K^{-1/2}(\Sigma_1)\left(I+iB\right)^{-1}\,K^{-1}(\Sigma_1)\,\left(I-iB\right)^{-1}K^{-1/2}(\Sigma_1)\,P\,,
\end{multline*}
where $B$ is a self adjoint operator. Using \eqref{eq:KSig}, we can bound the right hand side (in the operator sense) by

\[ \frac{2}{\mu}\,P\,K^{-1/2}(\Sigma_1)\left(I+iB\right)^{-1}\,\left(I-iB\right)^{-1}K^{-1/2}(\Sigma_1)\,P \, \le \, \frac{2}{\kappa}\,P\,K^{-1}(\Sigma_1)\,P \, \le \, \frac{4}{\kappa}\,,
\]
where in the last step we have used \eqref{eq:freegr}. As a result, we have obtained the bound
\[\left\|P\,\left(-\Delta/2-E-i\epsilon-\Sigma_1\right)^{-1}\right\|^2 \ \le\ \frac{4}{\kappa} \ = \ \frac{1}{n\lambda^2\|u\|^2_\infty} \,.\]
Using it and its analogue for $\Sigma_2$ in \eqref{eq:contrsigm}, we
arrive to
\[ \|T \sigma_1-T\sigma_2\| \ \le \ \frac{1}{n}\|\sigma_1-\sigma_2\|\ < \ \|\sigma_1-\sigma_2\|\,,\]
hence $T$ is a contraction on $B$. In summary, we have shown that \eqref{eq:self'} has a unique solution in the above energy interval, and
\be\label{eq:bndSi}
\|\Sigma\| \ = \ \|\sigma\| \ \le \ 2\,|\hat \Theta|\,\lambda^2\,\|D\|^2\,.
\ee
%
\subsection{Properties of $R_r$ in ($\mathcal N$) case}
Here we will consider the properties of the Green function $R_r(x,y)$ defined in \eqref{eq:Rsigma} where $\Sigma$ satisfies \eqref{eq:propsigm}. The following assertion holds:
\begin{lemma}\label{lem:propSigma}
Let $E_0$ be given by \eqref{eq:Esigm}. Then for $|\epsilon|<\kappa/2$, and all $E<E_0$ we have
\be \left|R_r(x,y)\right| \ \le \ \langle x|\,\left(-\Delta/2-E+E_0/2\right)^{-1}\,|y\rangle \,.\ee
\end{lemma}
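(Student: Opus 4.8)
The plan is to expand both sides into Neumann series in the self-energy and compare them term by term as \emph{positive} kernels. Recall that in case $(\mathcal N)$ we have $R_r=(H_r-E-i\epsilon)^{-1}$ with $H_r=-\Delta/2-\Sigma$ (see \eqref{eq:Rsigma}), and write $G_E:=(-\Delta/2-E)^{-1}$ for the free resolvent, whose kernel satisfies $G_E(x,y)>0$ for $E<0$ by the positivity recalled in Appendix \ref{sec:append}. Since $E<E_0<0$ this applies at energy $E$. I would start from $-\Delta/2-\Sigma-E-i\epsilon=(-\Delta/2-E)-(\Sigma+i\epsilon)$ and use the resolvent expansion
\[
R_r \ = \ \sum_{k=0}^{\infty} G_E\bigl[(\Sigma+i\epsilon)\,G_E\bigr]^{k}\,.
\]
This converges in operator norm: $-\Delta/2\ge 0$ gives $\|G_E\|\le(-E)^{-1}$, while $\|\Sigma\|\le\kappa/2$ (by \eqref{eq:propsigm}, \eqref{eq:bndSi}) and $|\epsilon|<\kappa/2$ give $\|\Sigma+i\epsilon\|\le\kappa$; and since $E<E_0=-\kappa\{(6-2E)^{\diam\hat\Theta}|\hat\Theta|+1\}<-\kappa$ we have $-E>\kappa$. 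In the same spirit, the right-hand side of the lemma is exactly $G_{\tilde E}$ with $\tilde E:=E-E_0/2<0$, so writing $c:=-E_0/2>0$ we have $G_{\tilde E}=(-\Delta/2-E-c)^{-1}=\sum_{k=0}^{\infty}c^{k}G_E^{k+1}$, a series of positive kernels which converges because $c<-E$.

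The heart of the matter is a single entrywise inequality for one self-energy insertion sandwiched between two free resolvents. Let $\mathcal T$ be the positive kernel $\mathcal T(z,z'):=|\Sigma(z,z')|+|\epsilon|\,\delta_{z-z'}$, so that $|(\Sigma+i\epsilon)(z,z')|\le\mathcal T(z,z')$. I claim that, as positive kernels,
\[
\bigl(G_E\,\mathcal T\,G_E\bigr)(z',z'') \ \le \ c\,\bigl(G_E^{2}\bigr)(z',z'')\,,\qquad c=-E_0/2\,.
\]
The $|\epsilon|$ part contributes $|\epsilon|\,(G_E^{2})(z',z'')\le\tfrac{\kappa}{2}(G_E^{2})(z',z'')$. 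For the $\Sigma$ part I would use that $\Sigma(z_1,z_1')$ is nonzero only when $z_1,z_1'$ lie in a common primitive cell, so $|z_1-z_1'|\le\diam\hat\Theta$; iterating Lemma \ref{lem:freeGFprop} then gives $G_E(z_1',z'')\le(6-2E)^{\diam\hat\Theta}\,G_E(z_1,z'')$, which lets me replace $z_1'$ by $z_1$ in the right free resolvent at the cost of that factor. Summing the remaining row, $\sum_{z_1'}|\Sigma(z_1,z_1')|\le|\hat\Theta|\,\|\sigma\|\le|\hat\Theta|\,\tfrac{\kappa}{2}$, produces exactly $(6-2E)^{\diam\hat\Theta}|\hat\Theta|\tfrac{\kappa}{2}(G_E^{2})(z',z'')$. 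Adding the two contributions gives the prefactor $\tfrac{\kappa}{2}\{(6-2E)^{\diam\hat\Theta}|\hat\Theta|+1\}=-E_0/2=c$, which is precisely why $E_0$ was chosen in the form \eqref{eq:Esigm}.

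With this single-insertion bound in hand, I would propagate it through all orders by an entrywise induction, using that multiplying an entrywise inequality between positive kernels on either side by a positive kernel preserves it. Abbreviating $\mathcal T G_E$,
\[
G_E(\mathcal T G_E)^{k+1}=(G_E\mathcal T G_E)(\mathcal T G_E)^{k}\le c\,G_E\bigl[G_E(\mathcal T G_E)^{k}\bigr]\le c^{k+1}G_E^{k+2}\,,
\]
the base case being trivial. Summing over $k$ and combining $|R_r(x,y)|\le\sum_{k}\bigl(G_E(\mathcal T G_E)^{k}\bigr)(x,y)$ with the expansion $G_{\tilde E}=\sum_{k}c^{k}G_E^{k+1}$ yields $|R_r(x,y)|\le G_{\tilde E}(x,y)=\langle x|(-\Delta/2-E+E_0/2)^{-1}|y\rangle$, as desired.

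The step I expect to require the most care is the iterated use of Lemma \ref{lem:freeGFprop} to pass from $G_E(z_1',z'')$ to $G_E(z_1,z'')$: the clean ratio bound $(6-2E)^{\diam\hat\Theta}$ follows by telescoping $|z_1-z_1'|$ nearest-neighbour steps, but Lemma \ref{lem:freeGFprop} is stated only for arguments distinct from the origin, so one must check the degenerate configurations in which the telescoping path passes through the coincidence point (i.e.\ $z''$ lies between $z_1$ and $z_1'$). These cause no trouble, since there $G_E(z_1,z'')$ is near the maximal value $G_E(0,0)$ and the relevant ratio is bounded by $1$. Apart from this bookkeeping and the matching of constants, everything reduces to the positivity of $G_E$ and the two convergent Neumann series.
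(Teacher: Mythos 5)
Your proposal is correct and follows essentially the same route as the paper: a Neumann expansion of $R_r$ around the free resolvent $G_E$, a term-by-term positive-kernel estimate using the support of $\Sigma$ within primitive cells, the entrywise bound $|\Sigma(x,y)|\le\kappa/2$, and the ratio bound of Lemma \ref{lem:freeGFprop} to produce the factor $(-E_0/2)^j\langle x|G_E^{j+1}|y\rangle$, followed by resummation into $(-\Delta/2-E+E_0/2)^{-1}$. Your explicit handling of the $i\epsilon\,\delta$ insertion (accounting for the ``$+1$'' in the definition of $E_0$) and the remark about the degenerate case of Lemma \ref{lem:freeGFprop} are slightly more careful than the paper's write-up, but the argument is the same.
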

\begin{proof}
\hspace{7cm} \vskip 3mm \noindent
We first expand $R_r$ in Neumann series
\[R_r \ = \ G\sum_{j=0}^\infty ((\Sigma+i\epsilon) G)^j\,,\]
with
\[G\ :=\ \left(-\Delta/2-E\right)^{-1}\,.\]
Since $\|\Sigma\| \le \kappa/2$, the series converges absolutely for $E<E_0$ and $|\epsilon|<\kappa/2$. To estimate each individual term in the expansion, we insert partitions of identity $I=\sum_{z\in\Z^3}|z\rangle\langle z|$ after each operator in the product. We obtain
\be \label{eq:terminexp}
\langle x|G(\Sigma G)^j|y\rangle \ = \ \sum_{\{z_k\}_{k=1}^{2j}} G(x,z_1)\,\prod_{l=1}^{j}\Sigma(z_{2l-1},z_{2l})\, G(z_{2l},z_{2l+1})\,,
\ee
with a convention $z_{2j+1}=y$. It follows from the construction of $\Sigma$ that
\[\Sigma(x,y) \ = \ 0 \hbox{ for } x-y\notin \hat \Theta\,.\]
Also, by  \eqref{eq:propsigm} we have
\[|\Sigma(x,y)| \ \le \ \kappa/2\,.\]
Using these bound together with the estimate \eqref{eq:freeGFbn} to estimate the left hand side of \eqref{eq:terminexp}, we get
\begin{multline} \label{eq:terminexp'}
\left|\langle x|G(\Sigma R)^j|y\rangle\right| \\ \le \ \left(\frac{\kappa}{2}\,\{(6-2E)^{\diam \hat\Theta}\,|\hat\Theta|+1\}\,\right)^j\,\sum_{\{z_k\}_{k=1}^{j}} G(x,z_1)\,\prod_{l=1}^{j}G(z_{2l},z_{2l+1})\\ = \  \left(\frac{-E_0}{2}\right)^{j}\,\langle x|G^{j+1}|y\rangle\,.
\end{multline}
Hence
\[\left|G_r(x,y)\right| \ \le \ \langle x|\,G\,\sum_{j=0}^\infty \left(-E_0/2\right)^{j}\,G^j\,| y \rangle = \ \langle x|\,G\,\left(I+E_0/2\cdot G\right)^{-1}\,| y \rangle\,.\]
But
\be G\,\left(I+E_0/2\cdot G\right)^{-1} \ = \ \left(-\Delta/2-E+E_0/2\right)^{-1} \,,\ee
hence the result follows.

\end{proof}
\subsection{Dipole single site potential}
Here we consider a special case of the single site potential $u_d$ defined in \eqref{eq:defdip}
Let  $T_\epsilon$ be the same map as the one defined in \eqref{eq:self1}. Then $|\hat u (p)|^2=4\sin^2(\pi p\cdot e_1)$,  and for the even function $f$ we have
\begin{multline}\label{eq:T_d}
(T_\epsilon f)(p)\ = \ 4\lambda^2\sin^2(\pi p\cdot e_1)\int_{\T^3} \,d^3q\, \frac{\cos(2\pi q\cdot e_1)}{e(q)-E-i\epsilon-f(q)}\\+\,4\lambda^2\int_{\T^3} \,d^3q\, \frac{\sin^2(\pi q\cdot e_1)}{e(q)-E-i\epsilon-f(q)}\,.
\end{multline}
 We will consider the energies $E$  that satisfy
\[ E\ < \ -\,(1+\lambda)\lambda^2 \,.\]
   The subspace $G$ of $ L^\infty(\T^3)$ consisting of the functions $f(p)=A + B\sin^2(\pi p\cdot e_1)$ is clearly invariant under the map  $T_\epsilon$. Let $G'$ denote an open subset of $G'$ characterized by
$|A|<\lambda^2$, $|B|<14\lambda^2$. We then have $T_\epsilon G'\subset G'$. Indeed, for
$f\in G'$, we can estimate the two terms on the right hand side of \eqref{eq:T_d} using two bounds below, that hold for $\lambda$ sufficiently small:
\begin{multline}\label{eq:T_d1}
\int_{\T^3} \,d^3q\, \frac{|\cos(2\pi q\cdot e_1)|}{|e(q)-E-i\epsilon-f(q)|}\ < \
\int_{\T^3} \,d^3q\, \frac{1+2\sin^2(\pi p\cdot e_1)}{(1-|B|)e(q)}\\ = \ \int_{\T^3} \,d^3q\, \frac{1}{(1-|B|)e(q)}\,+\,\frac{1}{3(1-|B|)} \ < \ \frac{7}{2}\,,
\end{multline}
where in the penultimate step we have used the symmetry of the integral with respect to spatial directions $\{1,2,3\}$ and in the last step we have used \eqref{eq:freegr}. The second estimate we need is
\be\label{eq:T_d2}
\int_{\T^3} \,d^3q\, \frac{2\sin^2(\pi q\cdot e_1)}{|e(q)-E-i\epsilon-f(q)|} \ < \ \int_{\T^3} \,d^3q\, \frac{2\sin^2(\pi q\cdot e_1)}{(1-|B|)e(q)} \  = \ \frac{1}{3(1-|B|)} \ < \ \frac{1}{2}\,.
\ee
Combining these two bounds we obtain
\be
\left|(T_\epsilon f)(p)\right|\ < \ 14\lambda^2\sin^2(\pi p\cdot e_1)\,+\,\lambda^2\,,
\ee
hence $T_\epsilon f\in G'$. Since $G'$ is a compact convex set, one can use Brouwer's fixed point theorem to conclude the existence of the fixed point (in fact one can use this technique to show the existence of the fixed point for {\it all} negative values of $E$). However, we also need a uniqueness of the fixed point, so we proceed to prove that $T_\epsilon$ is a contraction on $G'$. To this end, let us introduce a norm on $G$:
\[\|f\|_G \ =\ |A|\,+\,\lambda |B|\,,\quad \hbox{ for } f=A + B\sin^2(\pi p\cdot e_1)\,.\]
Let
$f=A + B\sin^2(\pi p\cdot e_1),g=C + D\sin^2(\pi p\cdot e_1)\in G'$, then the straightforward computation similar to the one done in \eqref{eq:fixpnt} gives
\begin{multline} \|(T_\epsilon f)(p)-(T_\epsilon g)(p)\|_G\\
\le \ 4\lambda^2 \int_{\T^3} \,d^3q\, \frac{(|B-D|+|A-C|)\sin^2(\pi q\cdot e_1)}
{\left((1-|B|)e(q)+\lambda^3\right)\left((1-|D|)e(q)+\lambda^3\right)}\\ +\,
4\lambda^3 \int_{\T^3} \,d^3q\, \frac{|A-C|+|B-D|\sin^2(\pi q\cdot e_1)}
{\left((1-|B|)e(q)+\lambda^3\right)\left((1-|D|)e(q)+\lambda^3\right)}
 \\ < \ 20\lambda^2|B-D|\,+\,5\lambda^{3/2}|A-C| \ < \
20\lambda\|f-g\|_G\,,
\end{multline}
for $\lambda$ small enough. We have used \eqref{eq:freegr} in the penultimate step. Hence by
Banach fixed point theorem, the self consistent equation
\eqref{eq:self} has a single valued solution $\sigma(p,E+i\epsilon)$
for all $p\in \T^3$ and all
\[E\ < \  E_d\ :=\ -\,(1+\lambda)\lambda^2\,.\]
Since $\sigma\in G'$ we have
\be\label{eq:sigma_d}
\sigma(p,E+i\epsilon) \ = \ A + B\sin^2(\pi p\cdot e_1)\,;\quad |A|\ <\ \lambda^2\,,\ |B|<\ 14\ \lambda^2\,.
\ee
It follows from the functional form of  $\sigma(p,E+i\epsilon)$ that for any unit vector $e\in\Z^3$  the function $\sigma$ is $1$-periodic,  analytic in
$p_e:=p\cdot e$ (in fact it is a constant unless $e=e_1$).  Let $E_d^*$ be a parameter that satisfies $0<E_d^*< E_d\,-\, E$ and let
\[\delta \ :=\ \sqrt{(E_d-E-E_d^*)/2}\,.\]
Then using \eqref{eq:sinext} and \eqref{eq:sigma_d} we deduce that for an  arbitrary  $p\in\T^3$  we have
\be\label{eq:sigma_dcomp}
Re\left(e(p+i\delta e)-E-i\epsilon-\sigma(p+i\delta e,E+i\epsilon)\right)\ >\  (1-5\lambda^2)\left(e(p)+E_d^*\right) \,.
\ee
%
\begin{proof}[Proof of Proposition \ref{prop:dipole}]
\hspace{7cm} \vskip 3mm \noindent
Let $\Lambda:[-L,L]^3\cap \Z^3$, $\Lambda_+:[-L-1,L+1]\times[-L,L]^2\cap \Z^3$  and let $\Omega_\Lambda:=\times_{k \in \Lambda_+} \R$. By the standard arguments (c.f. the discussion in Section 6 of \cite{Kirsch-89a}) it suffices to find a configuration $\omega\in\Omega_\Lambda$, for which  $\min\sigma(H^\Lambda_\omega)<-2\lambda^2+O(\lambda^3)+o(1)$. Here $o(1)$ is taken with respect to the $L$ variable. We choose $\omega$ in such a way that for $x\in\Lambda$, $V^\Lambda_\omega(x)=-2\lambda$ for $x\cdot e_1=0$ and $V^\Lambda_\omega(x)=0$ otherwise. Clearly, the bottom of the spectrum of $H^\Lambda_\omega$ converges, in the limit $L\rightarrow\infty$, to $\inf \sigma(\hat H)$, where the latter operator acts on the whole $\Z^3$ as
\[\hat H \ = \ -\frac{\Delta}{2}\,+\,\hat V\,,\]
with
\[\hat V(x)=-2\lambda  \hbox{ for }x\cdot e_1=0 \hbox{ and }\hat V(x)=0 \hbox{ otherwise}\,.\]
Readily, $\inf \sigma(\hat H)\le\min\sigma( \tilde H)$, where $\tilde H$ is a one dimensional restriction of $\hat H$ to the $e_1$ direction. However, $\tilde H$ is a rank one perturbation of the free Laplacian. It  follows from the rank one perturbation theory that $E_m:=\min\sigma (\tilde H)$ is given by the solution of the equation
\[2\lambda \ = \ G_{00}(E_m)\,,\]
where $G$ is a free one dimensional Green function. Using the Fourier transform, the above equation can be rewritten as
\[\frac{1}{2\lambda} \ = \ \int_{\T} \,\frac{dq}{2\sin^2(\pi q)-E_m}\,.\]
Finally, since
\[\int_{\T} \,\frac{dq}{2\sin^2(\pi q)-E_m} \ = \ \int_{-\infty}^\infty \,\frac{dq}{2\pi^2 q^2-E_m}\,+\,O(1) \ = \ \frac{1}{\sqrt{-2 E_m}}\,+\,O(1)\]
which holds for $E_m<0$, we obtain
\[E_m \ = \ -2\lambda^2  +O(\lambda^3)\,.\]
The rest of the argument coincides with the one of Theorem \ref{thm:main} for the ($\mathcal O$) case. One just need to replace the subscript $\mathcal O$ with $d$ everywhere in the proof.

\end{proof}

\paragraph{Acknowledgements}
It is a pleasure to thank Professor G\"unter Stolz for several
useful discussions concerning the dipole model.  



\end{document}